\documentclass[a4paper,USenglish,cleveref,autoref,thm-restate]{lipics-v2021}
\usepackage{amssymb,amsmath,mathtools,microtype,xspace}
\newif\ifproc
\proctrue

\usepackage[commandnameprefix=ifneeded, authormarkup=none]{changes}

\nolinenumbers

\hideLIPIcs  

\usepackage{dsfont}

\title{On the Stability of Minimum-Weight Perfect Matching on the Line} 

\author{Mark de Berg}{Department of Mathematics and Computer Science, TU Eindhoven, the Netherlands}{M.T.d.Berg@tue.nl}{https://orcid.org/0000-0001-5770-3784}{MdB is supported by the Dutch Research Council (NWO) through Gravitation-grant NETWORKS-024.002.003.}
\author{Ulrike Schmidt-Kraepelin}{Department of Mathematics and Computer Science, TU Eindhoven, the Netherlands}{U.Schmidt.Kraepelin@tue.nl}{https://orcid.org/0000-0002-9213-7746}{}
\author{Andree-Ovidiu Ștef}{Department of Mathematics and Computer Science, TU Eindhoven, the Netherlands}{andreestef34@gmail.com}{}{}
\authorrunning{M.~de Berg and U.~Schmidt-Kraepelin and A.O.~Ștef} 
\Copyright{Mark de Berg and Ulrike Schmidt-Kraepelin and Andree-Ovidiu Ștef} 
\ccsdesc[100]{Theory of computation~Design and analysis of algorithms}
\keywords{Euclidean matching, stable approximation algorithms, dynamic algorithms, online algorithms, bounded recourse} 
\EventEditors{}
\EventNoEds{0}
\EventLongTitle{}
\EventShortTitle{}
\EventAcronym{}
\EventYear{2026}
\EventDate{}
\EventLocation{}
\EventLogo{}
\SeriesVolume{0}
\ArticleNo{0}

\newcommand{\defeq}{:=}
\newcommand{\xor}{\otimes}

\newcommand{\alg}{\text{\textsc{alg}}\xspace}

\newcommand{\Malg}{M_{\mathrm{alg}}}
\newcommand{\Mopt}{M_{\mathrm{opt}}}
\newcommand{\Min}{M_{\mathrm{in}}}
\newcommand{\Mout}{M_{\mathrm{out}}}
\newcommand{\weight}{\mathrm{weight}}

\newcommand{\opt}{\text{\textsc{opt}}}

\newcommand{\Ieven}{I_{\mathrm{even}}}
\newcommand{\Iin}{I_{\mathrm{in}}}
\newcommand{\Iout}{I_{\mathrm{out}}}

\newcommand{\lsbp}[1]{\mathrm{lsbp}(#1)}

\newcommand{\msbp}[1]{\mathrm{msbp}(#1)}

\newcommand{\floor}[1]{\left\lfloor \, #1 \, \right\rfloor}
\newcommand{\ceil}[1]{\left\lceil \, #1 \, \right\rceil}

\newcommand{\Bin}{B^{\mathrm{in}}}

\newcommand{\lp}[1]{\ell(#1)}
\newcommand{\rp}[1]{r(#1)}

\newcommand{\lsplit}[1]{s_1(#1)}
\newcommand{\rsplit}[1]{s_2(#1)}
\newcommand{\Mun}{M_{\mathrm{unw}}}
\newcommand{\Mwr}{M_{\mathrm{wr}}}

\newcommand{\eps}{\varepsilon}

\newcommand{\bcov}{\mu}

\newcommand{\B}{\ensuremath{\mathcal{B}}}
\newcommand{\C}{\ensuremath{\mathcal{C}}}

\newcommand{\cG}{\ensuremath{\mathcal{G}}}

\newcommand{\Reals}{\mathbb{R}}
\newcommand{\Nats}{\mathbb{N}}

\DeclareMathOperator{\symdif}{\Delta}

\renewcommand{\leq}{\leqslant}
\renewcommand{\geq}{\geqslant}

\renewcommand{\ge}{\geqslant}

\colorlet{mdbc}{blue}
\colorlet{uskc}{teal}
\colorlet{asc}{orange}

\newif\ifcomment
\commenttrue
\ifcomment
\newcommand{\mdb}[1]{\textcolor{mdbc}{MdB: #1}}
\newcommand{\usk}[1]{\textcolor{uskc}{Ulrike: #1}}

\newcommand{\as}[1]{\textcolor{asc}{Andree: #1}}
\else
\newcommand{\mdb}[1]{}
\newcommand{\usk}[1]{}

\newcommand{\as}[1]{}
\fi 

\definechangesauthor[name=Andree, color=mdbc]{Mark}
\definechangesauthor[name=Andree, color=uskc]{Ulrike}
\definechangesauthor[name=Andree, color=asc]{Andree}

\begin{document}
\maketitle
\begin{abstract}
Computing a minimum-weight perfect matching for a point set $P$ in Euclidean space 
is a classic geometric optimization problem. We consider the problem in a dynamic
setting, where pairs of points may be added to or removed from the set~$P$. Our focus
is on maintaining an approximately optimal solution without making too many changes to the
solution. More precisely, we are interested in \emph{$k$-stable algorithms}, which
change at most $k$ edges in the matching after each update to the set~$P$.
In other words, we consider an online setting (with insertions and deletions) with \emph{bounded recourse}.

We study trade-offs between the stability of the algorithm and the approximation ratio
of the maintained solution for point sets in~$\Reals^1$. 
First, we present an $O(\sqrt{n})$-stable algorithm that maintains a $2$-approximation,
which we show to be optimal among all algorithms with sublinear stability. 
Second, we prove that any $o(\log n)$-stable algorithm has unbounded approximation ratio.
Our lower bounds hold even in the insertion-only case,
while our algorithm works in the fully dynamic case. 
Moreover, our lower bounds also hold for the bipartite variant of the problem.
\end{abstract}

\section{Introduction}
\label{sec:introduction}
\subparagraph{Background.} 
Computing minimum-weight perfect matchings is a classic problem in combinatorial optimization. 
In this problem, we are given an edge-weighted graph~$\cG$ on $2n$ vertices
and the goal is to compute a \emph{perfect matching}---a set of $n$ disjoint edges---of 
minimum total weight. The problem, which we will simply refer to as the
\emph{matching problem}, can be solved in polynomial time using the well-known 
blossom algorithm by Edmonds~\cite{Edmonds1965MaximumMA,Edmonds_1965}.
The currently fastest algorithm is due to Gabow~\cite{gabow-matching} and runs in $O(n(m + n \log n))$ time,
where $m$ is the number of edges of~$\cG$. In the Euclidean variant of 
the problem, the input graph~$\cG$ is the complete graph on a set $P$ of $2n$ points in 
Euclidean space and the edge weights correspond to the Euclidean distances between the points.
Varadarajan~\cite{DBLP:conf/focs/Varadarajan98}, improving on work 
of Vaidya~\cite{DBLP:journals/siamcomp/Vaidya89a}, showed that the 
matching problem in $\Reals^2$ can be solved in $O(n\sqrt{n}\log^5 n)$ time.
\medskip

We are interested in the dynamic variant of the Euclidean matching problem, where pairs of
points may be inserted into or deleted from the set~$P$. One can, of course, 
compute an optimal solution from scratch after each update, but this is time consuming.
More importantly, the new solution may be completely different from the previous one. 
Since breaking up existing pairs in the matching and establishing new pairs
may be expensive, switching to a completely new solution is undesirable.
Thus, we want to maintain a perfect matching that does not change much after
each update, while still having low weight.

This is similar to the goal of online algorithms, where the input is revealed
over time---in other words, there are only insertions---and one has to maintain
a feasible solution without revoking earlier decisions. For the matching problem
as defined above, irrevocable decisions do not make much sense: the only option would 
be to always match the two newly 
arrived points to each other, which can give an arbitrarily bad approximation ratio. 
Hence, research on the online matching problem mainly focused on bipartite matching,
where the point set $P$ is partitioned into two subsets: 
a set~$S$ of servers known in advance, and a set~$C$ of
clients that arrive over time and have to be matched to the servers.
This problem is already quite difficult in~$\Reals^1$,
where it has been studied by various authors.
The currently best algorithm for the online bipartite matching in~$\Reals^1$
is by Raghvendra~\cite{raghvendra2018}; it has a competitive ratio
of $O(\log n)$, improving an earlier $O(\log^2 n)$ bound by Raghvendra and Nayyar~\cite{Nayyar-Raghvendra2017}.
Recently, Peserico and Scquizzato~\cite{line-lower-bound} proved that any
online algorithm for bipartite matching in~$\Reals^1$ must have competitive ratio
$\Omega(\sqrt{\log n})$, thus providing the first non-constant lower bound.
There are also results on online bipartite matching in $\Reals^d$ for $d>1$
and in arbitrary metric spaces; 
see for example~\cite{permutation-Kalyanasundaram,permutation-Khuller,raghvendra2016}. 
Work that is closely related to our setting is by
Gupta, Krishnaswamy, and Sandeep \cite{bicromatic-with-recourse}, who show how to maintain a
3-approximation for the bipartite matching problem in~$\Reals^1$, while changing $O(\log n)$
edges in the matching upon each update. 
(They also obtain a result for general metrics, where the approximation ratio increases to~$O(\log n)$,
and for the case where clients as well as servers can arrive or depart; in the latter scenario,
the number of changes becomes $O(\log \sigma$), where $\sigma$ is the spread of the metric.) A similar result was achieved by Megow and N\"olke~\cite{DBLP:journals/algorithmica/MegowN25}.
Recently, Goranci et al. \cite{DBLP:conf/icml/GoranciKPSSZ25} also study the closely related bipartite matching problem in any fixed-dimensional Euclidean space, where both clients and servers can arrive or depart. They show that, using randomization techniques and two quadtree-like data structures, it is possible to achieve an algorithm with an $O((1/\eps)n^\eps)$ update runtime while maintaining an expected $O(1/\eps)$-approximation for any $0 < \eps \leq 1$. Moreover, they provide a lower bound of $2$ on the approximability of an algorithm with a sublinear amortized update runtime, which is similar to one of the results presented in this paper.

Our interest, however, lies in non-bipartite Euclidean matching. 
Thus, we must allow some changes to the existing
matching when new points are inserted or deleted. In other words, using terminology
from online algorithms, we must allow \emph{recourse}, similar to the papers by
Gupta, Krishnaswamy, and Sandeep \cite{bicromatic-with-recourse} and
Megow and N\"olke~\cite{DBLP:journals/algorithmica/MegowN25} we just mentioned.
The question then becomes: how many changes are required
to maintain a solution that has a good approximation ratio?
Or, more generally: which trade-offs can we obtain between
the number of changes we are allowed to make and the approximation ratio?
This problem was also considered by Bhore, Filtser, and Toth~\cite{DBLP:conf/soda/BhoreFT24},
who give a randomized $O(\log \sigma)$-approximation algorithm with recourse~$O(\log \sigma)$
for metric spaces of bounded doubling dimension, where $\sigma$ is the spread of the input.
(They also give a deterministic algorithm with recourse~$O(\log^2 n)$ that maintains a matching of weight
$O((\log n)\cdot \mbox{\sc mst}(P))$, where $\mbox{\sc mst}(P)$ denotes the weight of 
a minimum spanning tree on~$P$. Note that $\mbox{\sc mst}(P)$ can be arbitrarily much larger
than the minimum weight of a perfect matching.)
Furthermore, they note that a 3-stable algorithm---that is, an algorithm with recourse~1---cannot 
have bounded approximation ratio, even in~$\Reals^1$.

A related problem was studied by Matuschke, Schmidt-Kraepelin, and Verschae \cite{online-metric-matching}, 
who study non-bipartite matching in the following  setting. 
Instead of considering a sequence of insertions of
point pairs, they consider a scenario with two stages. In the first stage, a set $P_1$
of arbitrary (even) size is given, for which the algorithm must compute a perfect matching~$M_1$. 
In the second stage, a set $P_2$ of $2n$ points is given, after which the algorithm must compute
a perfect matching $M_2$ for $P_1\cup P_2$. The goal is that the maximum approximation
ratio of~$M_1$ and~$M_2$, denoted by~$\alpha$, is small, while the number of edges 
removed from $M_1$ in the second stage is small. For point sets in~$\Reals^1$,
they obtain approximation ratio~$\alpha=10$ with $|M_1\setminus M_2|\leq 2n$. 
When $n$ is known in advance, they improve the result to $\alpha=3$ and $|M_1\setminus M_2|\leq n$;
this latter result actually holds in arbitrary metric spaces.

\subparagraph{Our results.} 
We consider the matching problem for a fully dynamic point set $P$, focusing 
on trade-offs between the number of changes to the matching after each update and the
approximation ratio. Following previous work on the bipartite setting,
we study this fundamental problem in $\Reals^1$, where it already appears to be quite challenging.
We will state our results using the terminology introduced by
De~Berg, Sadhukhan, and Spieksma~\cite{range-assignment,domset-indset}. To this end,
consider a dynamic algorithm~\alg that maintains a perfect matching on~$P$.
Let $P(t)$ be the point set at time~$t$, where we assume $P(0)=\emptyset$,
let $\Malg(t)$ be the matching computed by~$\alg$ on~$P(t)$,
and let $\alg(t) := \weight(\Malg(t))$ be the weight of~$\Malg(t)$.
Furthermore, let $\Mopt(t)$ be an optimal (that is, minimum-weight) perfect matching on~$P(t)$,
and let $\opt(t) := \weight(\Mopt(t))$. 

We say that \alg is an \emph{$f(n)$-stable algorithm} if, for any sequence of
updates to the set~$P$, we have $|\Malg(t+1) \symdif \Malg(t)| \leq f(n)$ 
for all~$t\geq 0$, where $n := \tfrac{1}{2}\cdot \max(|P(t)|,|P(t+1)|)$ and $\Delta$ denotes the symmetric difference.
We refer to $f(n)$ as the \emph{stability} of \alg. The stability is essentially the
same as the \emph{recourse} used for online algorithms: if $|\Malg(t+1) \symdif \Malg(t)|=k$
then the number of edges deleted from the previous matching is equal to~$(k-1)/2$.
We say that \alg is an \emph{$\alpha$-approximation algorithm}
if $\alg(t) \leq \alpha\cdot \opt(t)$ for all $t\geq 0$. 

Our first result, presented in \autoref{sec:2-approx}, is an 
$O(\sqrt{n})$-stable $2$-approximation algorithm. 
Note that, unlike the results of Bhore, Filtser, and Toth~\cite{DBLP:conf/soda/BhoreFT24},
our bounds do not depend on the spread~$\sigma$ of the instance.
We also prove that the approximation ratio~2 is best possible, not just for
$O(\sqrt{n})$-stable algorithms but for any algorithm that is $o(n)$-stable. 
This result relies on a construction that is very similar to the one used by Goranci~et~al.~\cite{DBLP:conf/icml/GoranciKPSSZ25}.
Since our result explicitly lower bounds the per-update stability rather than the amortized runtime over a sequence of updates,
we include it for completeness.
Our second result, presented in \autoref{sec:lb-stability}, is that no algorithm with bounded 
approximation ratio can have stability~$o(\log n)$. 
This extends the lower bound by Bhore, Filtser, and Toth~\cite{DBLP:conf/soda/BhoreFT24},
who show that any $O(1)$-stable algorithm has approximation ratio~$\Omega(\log n)$ (and, more generally,
that any $r$-stable algorithm has approximation ratio~$\Omega(\log n/(r\log r))$.
Our lower bound uses an interesting construction, where the locations of the points are 
given by a generating function derived from the binary representation of the arrival times.
We also show how to adapt both lower bounds to the bipartite variant of the problem, where in each event 
the pair of points being inserted or deleted contains one server and one client. In this setting, a feasible solution is a perfect matching 
such that each edge in the matching connects a server to a client.

\subparagraph{Notation and terminology.} 
It will be convenient to view the matching problem in purely geometric terms. Thus, we
are given a dynamic point set $P$ in $\Reals^1$, and we want to maintain a set of
$|P|/2$ edges---segments connecting points from~$P$---such that every point
in~$P$ is an endpoint of exactly one edge. We refer to such a set of edges as 
a \emph{matching} on~$P$, omitting the adjective ``perfect'' for convenience.
If the matching has fewer than $|P|/2$ edges, we use the term \emph{partial matching}.
We denote the optimal matching and the matching maintained by our algorithm \alg on a point set $P$ by
$\Mopt(P)$ and $\Malg(P)$, respectively; thus $\Mopt(t)$ and $\Malg(t)$ are
shorthands for $\Mopt(P(t))$ and $\Malg(P(t))$.

We denote the edge between two points
$p_i,p_j\in P$ by $p_i p_j$ and we denote its length by~$|p_i p_j|$. 
Hence, the weight of a (possibly partial) matching~$M$ is $\weight(M) := \sum_{p_ip_j\in M} |p_ip_j|$.
We denote the total length of a set $I$ of intervals by $\|I\|$, and
for a (possibly partial) matching~$M$ we sometimes use $\|M\|$ as a shorthand for $\weight(M)$.
For a (possibly partial) matching~$M$ and an interval $[x_1,x_2]$, we define 
$M\cap [x_1,x_2] := \{e\cap [x_1,x_2] :e\in M\}$ to be the parts of the edges 
from $M$ that lie in~$[x_1,x_2]$.
Note that $M\cap [x_1,x_2]$ need not be a (partial) matching; it is just a set of segments. 
We say that an edge $p_i p_j$ \emph{covers} an edge $p_k p_{\ell}$ if $p_k p_{\ell} \subseteq p_i p_{j}$.

When specifying a point set, we always list its points from left to right. Thus, if
we write $P :=\{p_1,p_2,\ldots,p_{2n}\}$ then 
 $p_1<p_2<\cdots<p_{2n}$. 
For $1\leq i<2n$, we call $p_i$ the \emph{predecessor} of $p_{i+1}$
and we call $p_{i+1}$ the \emph{successor} of $p_{i}$.
The \emph{parity} of a point $p_i$ in a set $P$ is the parity of its index. Points of odd
parity are called \emph{odd points} and points of even parity are called \emph{even points}.
Note that the parity of a point may change when $P$ is being updated.
An edge between consecutive points is called an \emph{elementary edge}.
It is easily seen that $\Mopt(P)$ is unique and
consists of elementary edges, as stated in the following observation. 
\begin{observation} \label{obs:optimal-matching}
Let $P:=\{p_1,\ldots,p_{2n}\}$. Then $\Mopt(P) = \{ p_1 p_2, p_3 p_4, \ldots, p_{2n-1} p_{2n}\}$.
\end{observation}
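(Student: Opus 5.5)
The plan is a covering argument: compare an arbitrary matching $M$ on $P$ with the matching $\{p_1p_2,p_3p_4,\ldots,p_{2n-1}p_{2n}\}$. I will show that every point of each ``odd gap'' $[p_{2i-1},p_{2i}]$ is covered by at least one edge of $M$. Summing over $i$ then gives $\weight(M)\geq \sum_i |p_{2i-1}p_{2i}|$, with equality forcing $M$ to be this matching. Since the points are distinct and $p_1<\cdots<p_{2n}$, all gaps have positive length, which is what makes uniqueness possible.

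\textbf{Coverage of odd gaps.} Fix $i$ and a point $x$ in the open interval $(p_{2i-1},p_{2i})$. Exactly $2i-1$ points of $P$ lie to the left of $x$, an odd number. Every edge of $M$ with both endpoints left of $x$ uses two of them. Hence at least one point left of $x$ is matched to a point right of $x$, so some edge of $M$ contains $x$. Integrating over the gap gives $\|M\cap[p_{2i-1},p_{2i}]\|\geq |p_{2i-1}p_{2i}|$. The odd gaps are interior-disjoint, so
\[ \weight(M)\;\geq\;\sum_{i=1}^{n}\|M\cap[p_{2i-1},p_{2i}]\|\;\geq\;\sum_{i=1}^n |p_{2i-1}p_{2i}|. \]
The right-hand side is exactly the weight of the claimed matching, so that matching is optimal.

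\textbf{Uniqueness.} Suppose $M$ attains equality. Then every point of every even gap $(p_{2i},p_{2i+1})$ is covered by no edge, since such gaps have positive length and covering any part of one would add strictly positive weight. Likewise, no point of an odd gap is covered twice. The first condition means no edge of $M$ crosses an even gap. Hence $M$ matches points only within each block $\{p_{2i-1},p_{2i}\}$, which forces $M=\{p_1p_2,\ldots,p_{2n-1}p_{2n}\}$.

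The only step needing care is the parity argument for coverage, and it is elementary. An alternative route is an uncrossing/exchange argument: repeatedly replace two crossing or nested edges by shorter ones. That would also work, but the covering argument is cleaner and handles uniqueness directly.
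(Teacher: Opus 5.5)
Your proof is correct and complete. The parity count shows that every point of each odd gap $(p_{2i-1},p_{2i})$ is crossed by an edge of $M$, which gives optimality. The slack analysis on the even gaps then gives uniqueness, and the paper's notation $\Mopt(P)$ implicitly relies on that uniqueness.

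The paper gives no proof of this observation; it only says it is ``easily seen''. So there is no argument in the paper to compare against, and your covering argument supplies the omitted justification. It also fits the paper's own way of reasoning. Later the paper bounds matching weights by charging edge lengths to the intervals they cover, e.g.\ via $\|\Iin\cap\Ieven\|$ in \autoref{lem:S.1}. Your argument is essentially that bookkeeping applied to the optimal matching. Compared with the uncrossing/exchange route you mention, the covering argument gets uniqueness with no extra work, whereas an exchange argument needs a separate termination and strictness check.

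When you write it out, make two points in the uniqueness step explicit. First, $\weight(M)$ splits exactly into odd-gap coverage plus even-gap coverage, because every edge lies in $[p_1,p_{2n}]$. Second, an edge meeting an open even gap $(p_{2i},p_{2i+1})$ contains all of it, since no point of $P$ lies inside. It therefore contributes the full length $|p_{2i}p_{2i+1}|>0$, which is what forces equality to exclude it. Your remark that no point of an odd gap is covered twice is not needed.
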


\section{An $O(\sqrt{n})$-stable $2$-approximation algorithm}
\label{sec:2-approx}
Our algorithm maintains a decomposition~$\B(t)$ of the current point set $P(t)$ into $O(\sqrt{n})$ 
\emph{blocks}---sets of consecutive points in~$P$---of size $O(\sqrt{n})$ each, and then 
handles the insertion (or deletion) of a pair $q_1,q_2$ as follows. 
Let $B(q_1)$ and $B(q_2)$ be the blocks containing $q_1$ and~$q_2$,
respectively. We will reconstruct the matchings in $B(q_1)$ and $B(q_2)$
from scratch and only make $O(1)$ changes to each of the other blocks.
While this idea is simple, making it work is highly non-trivial. Next we describe the
various invariants we need to maintain and prove that they guarantee
a good approximation ratio. Then we prove the existence of a so-called 
\emph{good pair of splitters} for a block, which determines the matching inside a block. Finally,
we explain how to maintain the invariants with an $O(\sqrt{n})$-stable algorithm.

\subsection{The structure}
\label{subsec:structure}

\subparagraph{The block decomposition.}
Let $P := P(t)$ be the current point set. Recall that the points from $P$ are numbered
as $p_1,\ldots,p_{2n}$, from left to right.
A \emph{block} is defined as a subset of consecutive points from~$P$, that is,
a block is a subset~$B\subset P$ of the form $\{p_i,\ldots,p_j\}$ for indices $1\leq i\leq j \leq 2n$.
We denote the leftmost point of a block~$B$ by $\lp{B}$ and the rightmost point by $\rp{B}$.
With a slight abuse of notation, and when $P$ is clear from the context, we will write 
$\Malg(B)$ to denote the edges in $\Malg(P)$ with both endpoints in the same block~$B$.
Similarly, we write $\Mopt(B)$ for the edges in $\Mopt(P)$ with both endpoints in~$B$.

A \emph{block decomposition} of $P$ is an ordered set $\B := \{B_1,B_2,\ldots,B_{|\B|}\}$
of blocks that defines a partition of~$P$, where $B_1$ is the leftmost block and $\rp{B_i}$ 
is the predecessor of $\lp{B_{i+1}}$ for all~$1\leq i<|\B|$.
The $O(\sqrt{n})$ stability parameter will be ensured by 
the following invariant, where we define $m \defeq \floor{\sqrt{|P(t)|}}$:
\begin{enumerate}[label=\textbf{(B.\arabic*)}, labelwidth=*, align=left, itemindent=0pt, labelsep=1em, leftmargin=*, widest=9]
        \item For all $B\in \B$, it holds that $m \leq |B| \leq 2m + 3$.  \label{inv:B1}
\end{enumerate}
We will need two additional technical invariants to be able to maintain Invariant~\ref{inv:B1}
in an efficient manner. We will state these later, when we describe the details of the update algorithm.

\subparagraph{Connectors between blocks.}
The matching $\Malg$ maintained by our algorithm will mainly use intra-block edges, 
that is, edges that connect points within the same block. 
Our decomposition invariant does not guarantee that blocks are of even size, however,
so we may also need some inter-block edges.
We will only use inter-block edges connecting the rightmost point of a
block~$B_i$ to the leftmost point of the next block~$B_{i+1}$. 
We call such inter-block edges \emph{connectors}.
The next invariant specifies which connectors we use and shows that 
the set of connectors used by $\Malg$ is uniquely determined by the current block decomposition. 
\begin{quotation}
\noindent \textbf{Connector invariant.} 
Let $B_i$ and $B_{i+1}$ be consecutive blocks in the current block decomposition~$\B$,
and let $p_j := \rp{B_i}$. Then the matching $\Malg$ satisfies:
\begin{enumerate}[label=\textbf{(C)}, labelwidth=*, align=left, itemindent=0pt, labelsep=1em, leftmargin=*, widest=9]
\item $p_j p_{j+1}\in \Malg$ iff $p_j$ is an odd point.  \label{inv:C}
\end{enumerate}
\end{quotation}
Note that $B_{i+1}$ must exist when $\rp{B_i}$ is an odd point. 
\autoref{obs:optimal-matching} and the connector invariant together imply that
a connector $p_j p_{j+1}$ is used in $\Malg(P)$ iff it is used in $\Mopt(P)$.

\subparagraph{Unwrapped, inner-wrapped, and outer-wrapped blocks.}
We now come to the heart of our construction, which is how we match points within a block~$B_i\in \B$.
Let $p_j := \lp{B_i}$ and $p_k := \rp{B_i}$. Recall that the connector $p_{j-1} p_j$ is 
an edge in~$\Malg$ iff $p_j$ is an even point. Similarly, $p_{k}p_{k+1}$ is an edge in $\Malg$ iff 
$p_k$ is an odd point. 
This leads us to define $\Bin_i := \{ p_a,\ldots,p_b\}$,
where $a=j$ if $j$ is odd and $a=j+1$ if $j$ is even, and 
$b=k$ if $k$ is even and $b=k-1$ if $k$ is odd. (If $|B_i| \leq 2$ then $\Bin_i = \emptyset$.) 
Thus, $\Bin_i$ contains
the points from $B_i$ that still need to be matched after adding the connectors;
see \autoref{fig:connectors}.
\begin{figure}
\centering
\includegraphics{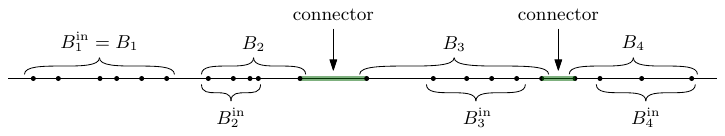}
\caption{A block decomposition.}
\label{fig:connectors}
\end{figure}
Note that $|\Bin_i|$ must be even, as it starts at an odd point and ends at an even point.
We call $\Bin_i$ the \emph{interior} of $B_i$; we may have $\Bin_i=B_i$.
Note that $\Bin_i$ may change even if $B_i$ does not change,
because the insertion of a point in a block $B_j$ with $j<i$ causes
a parity change for~$\lp{B_i}$.

We define three types of blocks, each using a specific matching in its interior.
The type of any given block---in other words, which matching $\Malg$ uses in the
interior of the block---will
depend on the parity of $\lp{B_i}$, as explained later. 
The three types are called unwrapped, inner-wrapped, and outer-wrapped.
To define these types for a block~$B_i\in\B$, we first define what it means
for a sub-block~$B\subseteq B_i$ to be wrapped or unwrapped; see \autoref{fig:wrapped-unwrapped}.
\begin{figure}
\centering
\includegraphics{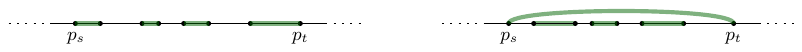}
\caption{An unwrapped sub-block (left) and a wrapped sub-block (right).}
\label{fig:wrapped-unwrapped}
\end{figure}
\begin{definition}
Let $B_{s,t}:=\{p_s,\ldots,p_t\}$ be an even-sized sub-block of a block $B_i\in \B$. 
Define 
\[
\Mun(B_{s,t}) := \{p_s p_{s+1},p_{s+2} p_{s+3},\ldots,p_{t-1}p_t\}
\]
and 
\[
\Mwr(B_{s,t}) := \{ p_s p_t\}\cup \{p_{s+1} p_{s+2},p_{s+3} p_{s+4},\ldots,p_{t-2}p_{t-1}\}.
\]
We call $B_{s,t}$ \emph{unwrapped} if $\Malg(B_{s,t}) = \Mun(B_{s,t})$ and
\emph{wrapped} if $\Malg(B_{s,t}) = \Mwr(B_{s,t})$. 
\end{definition}
Note that if a sub-block $B_{s,t}$ is unwrapped and $p_s$ is an odd point, 
then $\Malg(B_{s,t})=\Mopt(B_{s,t})$ by \autoref{obs:optimal-matching}.
The definition of an unwrapped block is straightforward.
\begin{itemize}
\item \emph{Unwrapped blocks.} An unwrapped block~$B_i$ is a block such that $\Bin_i$ is unwrapped.
\end{itemize}
Inner- and outer-wrapped blocks will be defined with respect to two points 
$\lsplit{B_i}$ and $\rsplit{B_i}$ chosen from~$\Bin_i$, which we call the \emph{splitters} of~$B_i$. 
The choice of the splitters is crucial to obtain a good approximation ratio, and it 
will be discussed in detail below. For now, it suffices to know that splitters
have the following property. Let $\Bin_i := \{p_a,\ldots,p_b\}$
and let $\lsplit{B_i} := p_{s}$ and $\rsplit{B_i} := p_{t}$. 
Then $p_a \leq p_s<p_t\leq p_b$ and $\{p_s,\ldots,p_t\}$ is an even-sized sub-block. 
Inner- and outer-wrapped blocks are now defined as follows; 
see also \autoref{fig:toggle}.
\begin{figure}[b]
\centering
\includegraphics{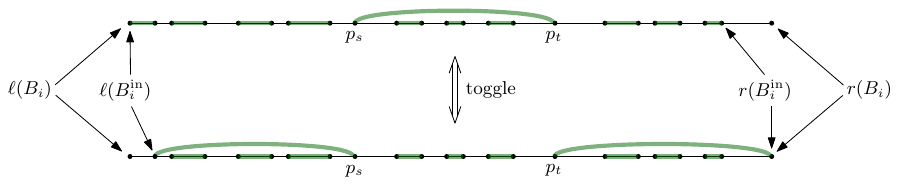}
\caption{An inner-wrapped block (top) and an outer-wrapped block (bottom).
         If the parity of $\lp{B_i}$---and, hence, of $p_s$---changes
         due to an insertion in an earlier block,
         then the type of $B_i$ changes from inner-wrapped to outer-wrapped,
         or vice versa. 
         We then say that $B_i$ is toggled. Note that in the situation at the bottom,
         the point $\lp{B_i}$ must be incident to a connector.}
\label{fig:toggle}
\end{figure}
\begin{itemize}    
\item \emph{Inner-wrapped blocks.} An inner-wrapped block is a block $B_i\in\B$ where
      $\Malg$ is such that $\{p_a,\ldots,p_{s-1}\}$ and $\{p_{t+1},\ldots,p_{b}\}$ are unwrapped 
      and $\{p_{s},\ldots,p_{t}\}$ is wrapped. 
\item \emph{Outer-wrapped blocks.} An outer-wrapped block is a block $B_i\in\B$ where
     $\Malg$ is such that $\{p_a,\ldots,p_{s}\}$ and $\{p_{t},\ldots,p_{b}\}$ are wrapped and 
     $\{p_{s+1},\ldots,p_{t-1}\}$ is unwrapped. 
\end{itemize}
Observe that $\{p_a,\ldots,p_{s-1}\}$ must have even size for an inner-wrapped
block to be well-defined. Since $p_a$ is an odd point by definition,
$\{p_a,\ldots,p_{s-1}\}$ has even size iff $\lsplit{B_i}$ is an odd point.
(Recall that $p_s=\lsplit{B_i}$.)
Similarly, $\lsplit{B_i}$ must be an even point for an 
outer-wrapped block to be well-defined. This does not mean that the type of a 
block~$B_i$ is fixed as long as $B_i$ and $\lsplit{B_i}$ do not change. Indeed, 
the parity of $\lsplit{B_i}$ can change due to an insertion into a block~$B_j$
for some~$j<i$. When this happens, we must \emph{toggle} $B_i$, that is,
we must change its type from inner-wrapped to outer-wrapped, or vice versa. 
As illustrated in \autoref{fig:toggle}, toggling a block only requires $O(1)$ 
changes in $\Malg$. (The figure illustrates the situation
where $|B_i|$ is odd, but toggling also requires 
only $O(1)$ changes when $|B_i|$ is even.) 

We want to pick $\lsplit{B_i}$ and $\rsplit{B_i}$ such that
we have a good approximation ratio regardless if $B_i$ is inner-wrapped or outer-wrapped. This leads to the following definition.
\begin{definition}\label{def:good-splitters}
Let $B_i$ be a block with $|B_i|\geq 6$ and define $\opt(\Bin_i) := \weight(\Mopt(\Bin_i))$.
Then two points $\lsplit{B_i}:= p_s$ and 
$\rsplit{B_i}:= p_t$ with $p_s<p_t$ form a \emph{good pair of splitters} if
\begin{enumerate}
\item[(i)]  The set $\{p_s,\ldots,p_t\}$ is an even-sized sub-block of $B_i$. 
\item[(ii)] If $p_s$ is an odd point in the current set~$P$ then the total weight of the 
            inner-wrapped matching of $B_i$ is at most $2 \cdot \opt(\Bin_i)$,
            and if $p_s$ is an even point in the current set~$P$ then the total weight of the 
            outer-wrapped matching of $B_i$ is at most $2 \cdot \opt(\Bin_i)$. 
\end{enumerate}
\end{definition}
Our algorithm will maintain to the following invariant.
\begin{quotation}
\noindent \textbf{Wrapping invariant.} 
Let $B_i$ be a block in the current block decomposition~$\B$. The matching $\Malg$ 
satisfies the following property for a good pair of splitters $\lsplit{B_i},\rsplit{B_i}$.
\begin{enumerate}[label=\textbf{(W.\arabic*)}, labelwidth=*, align=left, itemindent=0pt, labelsep=1em, leftmargin=*, widest=9]
\item If $|B_i|<6$ then $B_i$ is unwrapped. \label{inv:W1}
\item If $|B_i|\geq 6$ and $\lsplit{B_i}$ is an odd point then $B_i$ is inner-wrapped.  \label{inv:W2}
\item If $|B_i|\geq 6$ and $\lsplit{B_i}$ is an even point then $B_i$ is outer-wrapped.   \label{inv:W3}
\end{enumerate}
\end{quotation}
In the next subsection we show that a good pair of splitters always exists, but
first we show that the wrapping and connector invariants together 
imply that $\Malg$ is a $2$-approximation.
\begin{restatable}{lemma}{twoapprox} \label{lem:2-approx}
Let $\Malg(P)$ be a matching adhering to the wrapping invariant
and the connector invariant. Then $\Malg(P)$ is a perfect matching on $P$
such that $\weight(\Malg(P))\leq 2\cdot \opt(P)$.
\end{restatable}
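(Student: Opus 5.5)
The plan is to decompose $\Malg(P)$ into the connectors and the interior matchings of the blocks, and to compare each part with the matching part of $\Mopt(P)$ given by \autoref{obs:optimal-matching}.

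\emph{Perfect matching.} By the connector invariant, a point $\lp{B_i}$ with $i>1$ is covered by a connector iff it is even. Likewise, $\rp{B_i}$ is covered by a connector iff it is odd; if it is odd, then $B_{i+1}$ exists because $|P|$ is even. $\lp{B_1}=p_1$ is odd, so it is never covered by a connector. Hence the points of $B_i$ not covered by connectors are exactly $\Bin_i$. This set starts at an odd point, ends at an even point, and so has even size. It remains to check that each block type perfectly matches $\Bin_i$:
\begin{itemize}
\item \emph{Unwrapped:} immediate from the definition.
\item \emph{Inner-wrapped:} the splitter $p_s$ is odd and $\{p_s,\ldots,p_t\}$ is even-sized, so the three pieces $\{p_a,\ldots,p_{s-1}\}$, $\{p_s,\ldots,p_t\}$ and $\{p_{t+1},\ldots,p_b\}$ are even-sized consecutive sets partitioning $\Bin_i$. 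Each piece is perfectly matched by $\Mun$ or $\Mwr$.
\item \emph{Outer-wrapped:} $p_s$ is even, so $\{p_a,\ldots,p_s\}$ is even-sized. Since $t-s$ is odd, $\{p_{s+1},\ldots,p_{t-1}\}$ and $\{p_t,\ldots,p_b\}$ are also even-sized.
\end{itemize}
Since the blocks partition $P$, every point is matched exactly once.

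\emph{Weight.} Write $\Mopt(P)$ as the disjoint union of the connectors it uses and the sets $\Mopt(\Bin_i)$. I expect that every elementary edge $p_{2k-1}p_{2k}$ either is a connector or has both endpoints in the same block. In the second case both endpoints lie in $\Bin_i$: a point of $B_i\setminus\Bin_i$ is matched in $\Mopt$ to its neighbour across the block boundary, since connectors coincide in $\Malg$ and $\Mopt$, as noted after the connector invariant. Hence
\[
\opt(P)=\sum_{\text{connectors } c\in\Mopt}|c| \;+\; \sum_i \opt(\Bin_i),
\]
and $\weight(\Malg(P))$ is the same connector sum plus the weights of the interior matchings. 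It then suffices to show that the interior matching of each $B_i$ has weight at most $2\cdot\opt(\Bin_i)$.

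\emph{Per-block bound.} For $|B_i|\geq 6$ this is exactly condition (ii) of \autoref{def:good-splitters}, applied to the type prescribed by (W.2)/(W.3). For $|B_i|<6$, the block is unwrapped by (W.1). $\Bin_i$ starts at an odd point, so $\Mun(\Bin_i)$ consists of odd--even elementary edges, which equals $\Mopt(\Bin_i)$ by \autoref{obs:optimal-matching}; its weight is therefore $\opt(\Bin_i)$. Summing over blocks gives $\weight(\Malg(P))\leq 2\cdot\opt(P)$.

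The main obstacle is the bookkeeping in the weight identity: verifying that no optimal edge straddles a block boundary except as a connector, and that the interior sets $\Bin_i$ are precisely covered by the $\Mopt$ edges within them. Everything else follows directly from the definitions and the assumed good splitters.
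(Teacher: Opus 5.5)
Your proposal is correct and follows the paper's proof. The paper also splits $\Malg(P)$ into connectors, which coincide with those of $\Mopt(P)$ by (C), and the interior matchings, which are optimal for unwrapped blocks and within a factor $2$ of $\opt(\Bin_i)$ for wrapped blocks by the good-splitter condition, and then sums. You additionally check the even-size pieces of each block type and the decomposition $\opt(P)=\sum_{c}|c|+\sum_i\opt(\Bin_i)$, both of which the paper takes for granted.
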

\begin{proof}
$\Malg(P)$ is a perfect matching because of the connector invariant
and the fact that the wrapping invariant guarantees that the edges used in each block~$B_i$
form a perfect matching of~$\Bin_i$. To bound the total weight of $\Malg(P)$, we first
observe that the set $\C$ of connectors used by $\Malg(P)$ is also used in $\Mopt(P)$, by invariant~\ref{inv:C}.
The same is true for the edges used in unwrapped blocks, because $\Bin_i$ is an even-sized
sub-block starting with an odd point. The remaining edges used by $\Malg$ are the ones used in
inner-wrapped or outer-wrapped blocks. Their total weight is at most twice the total weight
of the edges of $\Mopt(\Bin_i)$, by the wrapping invariant and the definition of a good
pair of splitters. Hence,
\[
\begin{array}{lll}
\alg(P) & = & \sum\limits_{e\in \C} |e| + \sum\limits_{B_i\in\B} \alg(\Bin_i) \\[1mm]
                  & \leq & \sum\limits_{e\in \C} |e| + \sum\limits_{B_i\in\B} 2\cdot \opt(\Bin_i) \\[1mm]
                  & \leq & 2\cdot \left( \sum\limits_{e\in \C} |e| + \sum\limits_{B_i\in\B} \opt(\Bin_i) \right) \\[1mm]
                  & = & 2\cdot \opt(P),
\end{array}\]
which proves the lemma.
\end{proof}

\subsection{Finding a good pair of splitters}
\label{subsec:splitters}
Let $B$ be a block of size at least~$6$ for which we want to find a good pair
of splitters. It will be convenient to label the points in $B$ from left to right
as $b_0,b_1,\ldots,b_{z+1}$, but it is important to realize that there may be other points
from $P$ before~$B$. Thus, the parity of $b_0$ can still be odd or even. 
Recall that the splitters should be chosen from $\Bin$, the interior of the block~$B$.
Depending on the parity of $b_0$ (which can change), $b_0$ and $b_{z+1}$ may
or may not be part of~$\Bin$, so we have to choose the splitters from $\{b_1,\ldots,b_z\}$.
To simplify the notation, we assume that $b_1=0$ and $b_z=1$. This is without loss of generality,
as scaling the instance does not change the approximation ratio we achieve inside the block.
\medskip

Let $x_1,x_2\in [0,1]$ be a potential pair of splitters, with $x_1<x_2$.
Then $x_1$ and $x_2$ split the interval $[0,1]$ into two parts: an \emph{inner part} $[x_1,x_2]$
and an \emph{outer part}~$[0,x_1)\cup (x_2,1]$. We first find a suitable partition
into an inner and an outer part in a continuous setting, where we can pick
the splitting points $x_1$ and $x_2$ anywhere in the interval~$[0,1]$. 
After that, we explain how to obtain a good pair of splitters $b_s,b_t \in \{b_1,,\ldots,b_z\}$
from the pair~$x_1,x_2$.
\[
\mbox{Define} \hspace{5mm} \Ieven := [b_1,b_2]\cup [b_3,b_4] \cup \cdots \cup [b_{2y-1},b_{2y}], \mbox{where $y=\floor{z/2}$.} 
\]
Thus, $\Ieven$ contains the edges that are in $\Mopt\cap[0,1]$ when $b_0$ is an even point.
We now define a \emph{target length} for either the inner or the outer part.
Recall that $\|\Ieven\|$ denotes the total length of the edges in $\Ieven$;
thus, $\|\Ieven \| = \weight(\Ieven)$. 
The target length will depend on $\|\Ieven\|$ and is defined as follows.
Let $\phi:[0,1]\rightarrow [0,1]$ be given by 
\[
\phi(\ell) :=
\begin{cases}
\tfrac{\ell}{2(1-\ell)} & \mbox{ if $0\leq \ell\leq \tfrac{1}{2}$} \\[1mm]
\tfrac{3\ell-1}{2 \ell} & \mbox{ if $\tfrac{1}{2}<\ell\leq 1$} 
\end{cases}
\]
The target length is now defined $\phi(\|\Ieven\|)$. To shorten the formulas 
we define $W:= \|\Ieven\|$ so that the target length becomes $\phi(W)$.
(Unfortunately, there is no clear intuition behind the definition of $\phi(W)$; 
it is simply the function such that the computations work out.)
\begin{lemma}\label{lem:continuous}
For any block~$B$, there exists $x_1,x_2\in [0,1]$ with $x_1<x_2$ such that 
\begin{enumerate}[label=(\roman*), labelwidth=*, align=left, itemindent=0pt, labelsep=1em, leftmargin=*, widest=9]
\item $\|\Iin\| = \phi(W)$ and $\|\Iin \cap \Ieven\| = W\cdot \phi(W)$, or
\item $\|\Iout\| = \phi(W)$ and $\|\Iout \cap \Ieven\| = W\cdot \phi(W)$,
\end{enumerate}
where $\Iin := [x_1,x_2]$ and $\Iout := [0,x_1)\cup (x_2,1]$ are the inner and outer parts defined by~$x_1,x_2$. 
Moreover, a pair $x_1,x_2$ with the required property can be computed in $O(|B|)$ time. 
\end{lemma}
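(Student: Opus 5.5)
We need an interval or a complement-of-interval of length $\phi(W)$ whose $\Ieven$-density is exactly $W$, which is also the global density of $\Ieven$ in $[0,1]$. The tool is an intermediate value (``necklace splitting'') argument for windows of fixed length. First check $0\le\phi(W)\le 1$. For $W\le\tfrac12$ we have $\phi(W)=W/(2(1-W))\le\tfrac12$, and for $W>\tfrac12$ we have $\phi(W)=(3W-1)/(2W)\in(\tfrac12,1]$. So $\phi$ is a well-defined target length. The case split on $W$ is presumably what decides between options (i) and (ii).

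Let $L:=\phi(W)$. For $x\in[0,1-L]$ define
\[
g(x):=\|[x,x+L]\cap\Ieven\| - W L ,
\]
which is continuous and piecewise linear in $x$. Now consider wrapping $[0,1]$ into a circle. A circular arc of length $L$ is either an interval $[x,x+L]$, or the complement of an interval, namely $[0,x)\cup(x+1-L,1]$. The second kind is exactly an outer part with $x_1=x$ and $x_2=x+1-L$. So extend $g$ to all $x\in[0,1)$ by letting $g(x)$ be the $\Ieven$-measure of the circular arc from $x$ of length $L$, minus $WL$.

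The average of $g$ over $x\in[0,1)$ is $\|\Ieven\|\cdot L - WL = 0$, since each point is covered by arcs of total starting-measure $L$. By continuity on the circle, $g$ has a zero. A zero arc is either a genuine interval, giving (i) with $x_1=x$, $x_2=x+L$, or a wrapped arc, giving (ii). Degenerate cases need a small check. If $L=0$ (i.e.\ $W=0$), we need $x_1<x_2$ to be strict; take any tiny interval inside a gap of $\Ieven$, or handle it via (ii). If $L=1$, then $W=1$ and everything is trivial. So the exact formula for $\phi$ is not needed for this lemma; it only matters later for the approximation bound. The main obstacle I anticipate is exactly these boundary and strict-inequality issues, together with ensuring $x_1<x_2$ in the outer case when $L=1$.

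For the $O(|B|)$ algorithm, note that $g$ is piecewise linear with breakpoints only where $x$ or $x+L$ (mod~$1$) hits some $b_i$. That gives $O(|B|)$ breakpoints. Sweep them in order with two pointers, maintaining the arc's $\Ieven$-measure incrementally. Stop at the first consecutive pair of breakpoints where $g$ changes sign, or where $g$ vanishes, and solve the resulting linear equation for $x$. The breakpoint lists are the sorted $b_i$ and the sorted $b_i-L$ mod $1$; both are sorted in linear time by rotating, so the whole sweep runs in linear time.
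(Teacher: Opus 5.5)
Your proof is correct and follows essentially the paper's argument: a window of length $\phi(W)$ slides around $[0,1]$ with wrap-around, averaging shows its mean $\Ieven$-measure is $W\phi(W)$, the intermediate value theorem gives a window achieving it, you read off (i) or (ii) according to whether that window wraps, and the $O(|B|)$ bound comes from the piecewise-linear structure. One small remark: the degenerate cases $L\in\{0,1\}$ never arise here, because $|B|\geq 6$ and distinct points give $0<W<1$; and for $L=0$ your ``tiny interval in a gap'' would not have length $0$ anyway, so only the (ii) choice $x_1=0$, $x_2=1$ would work there.
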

\begin{proof}
For $x\in[0,1]$ we define an interval, or pair of intervals, $I_x$ as follows:
\[
I_x := 
\begin{cases}
[x,x+\phi(W)] & \mbox{ if $x\leq 1-\phi(W)$} \\
[0,x-1+\phi(W)] \cup [x,1]  & \mbox{ if $1-\phi(W)<x\leq 1$}. \\
\end{cases}
\]
Note that $\|I_x||=\phi(W)$ for all~$x$. Intuitively, as $x$ increases from $0$ to $1$, 
the interval $I_x$ moves to the right until it hits the point~$1$, after which it ``wraps around'' 
and consists of two subintervals. Define $f(x) := \|I_x\cap \Ieven\|$ 
and consider the integral $\int_0^1 f(x)\,\mathrm{d}x$. Note that a point $y\in [0,1]$
is contained in $I_x\cap\Ieven$ iff $x$ is contained in the set $\Ieven-\phi(W)$, that is,
in the set obtained by shifting $\Ieven$ backwards over a distance~$\phi(W)$ and wrapping around at $x=0$ 
where necessary. Thus, each $y\in \Ieven$ contributes a total value $\phi(W)$ to the integral. Hence,
\[
\int_0^1 f(x)\,\mathrm{d}x =  \|\Ieven\| \cdot \phi(W) = W \cdot \phi(W),
\]
where the second equality follows from simply plugging in the definition of~$W$.
Thus, $\min_{0\leq x\leq 1} f(x) \leq W\cdot \phi(W) \leq \max_{0\leq x\leq 1} f(x)$, and since
$f(x)$ is continuous, this implies that there exists an $x^*$ such that $f(x^*)=W\cdot \phi(W)$.
To satisfy the conditions of the lemma we can therefore set 
$x_1:=x^*$ and $x_2 := x^*+\phi(W)$ if $x^*\leq 1-\phi(W)$, and
we can set $x_1:=x^*-1+\phi(W)$ and $x_2 := x^*$ otherwise.

Observe that $f(x)$ is a piecewise linear function with $O(|B|)$ pieces, which we can
construct in $O(|B|)$ time, after which we can find $x^*$ in $O(|B|)$ time.
\end{proof}
Let $x_1,x_2$ be a pair of points with the property stated in \autoref{lem:continuous} 
and let $b_i$ and $b_j$ such that $b_i\leq x_1< b_{i+1}$ and $b_j<x_2\leq b_{j+1}$. 
Note that $b_i, b_{i + 1}, b_j, b_{j + 1} \in \{b_1, \dots, b_z\}$. 
To be able to define the splitters in a clean way, we first prove that $i \neq j$.
\begin{lemma} \label{lem:inoteqj}
Let $x_1,x_2$ be a pair of points with the property stated in \autoref{lem:continuous} 
and let $b_i$ and $b_j$ such that $b_i\leq x_1< b_{i+1}$ and $b_j<x_2\leq b_{j+1}$. 
Then $i \neq j$.
\end{lemma}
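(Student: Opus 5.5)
The plan is to argue by contradiction. Suppose $i=j$. Then $b_i\leq x_1<x_2\leq b_{i+1}$, so the inner part $\Iin=[x_1,x_2]$ lies inside the single elementary interval $[b_i,b_{i+1}]$, which has positive length. Since $\Ieven$ is a union of elementary intervals $[b_{2k-1},b_{2k}]$, the interval $[b_i,b_{i+1}]$ either belongs to $\Ieven$ or meets it only in its endpoints, which have measure zero. So either $\Iin\subseteq\Ieven$ or $\|\Iin\cap\Ieven\|=0$. In each case, both alternatives (i) and (ii) of \autoref{lem:continuous} reduce to algebraic conditions on $W$ and $\phi(W)$. I will show that none of these conditions can hold.

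\emph{Preliminary bounds.} First I pin down the range of $W$ and $\phi(W)$. Since $|B|\geq 6$, we have $z\geq 4$, hence $y\geq 2$. The points of $P$ are distinct, so $[b_1,b_2]\subseteq\Ieven$ has positive length, giving $W>0$. The gap $(b_2,b_3)$ has positive length and lies in $[0,1]$ but is disjoint from $\Ieven$, giving $W<1$. From the definition of $\phi$, the first branch maps $(0,\tfrac12]$ onto $(0,\tfrac12]$ and the second branch maps $(\tfrac12,1)$ into $(\tfrac12,1)$. Therefore $0<\phi(W)<1$.

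\emph{Case $[b_i,b_{i+1}]\subseteq\Ieven$.} Here $\|\Iin\cap\Ieven\|=\|\Iin\|$, and the outer part satisfies $\|\Iout\|=1-\|\Iin\|$ and $\|\Iout\cap\Ieven\|=W-\|\Iin\|$.
\begin{itemize}
\item Alternative (i) would give $\phi(W)=W\phi(W)$. Since $\phi(W)>0$, this forces $W=1$, a contradiction.
\item Alternative (ii) gives $1-\|\Iin\|=\phi(W)$ and $W-\|\Iin\|=W\phi(W)$. Subtracting the two equations yields $1-W=\phi(W)(1-W)$. Since $W<1$, this forces $\phi(W)=1$, a contradiction.
\end{itemize}

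\emph{Case $[b_i,b_{i+1}]$ is a gap of $\Ieven$.} Here $\|\Iin\cap\Ieven\|=0$ and $\|\Iout\cap\Ieven\|=W$.
\begin{itemize}
\item Alternative (i) gives $W\phi(W)=0$, which is impossible since both factors are positive.
\item Alternative (ii) gives $W=W\phi(W)$, so $\phi(W)=1$, again a contradiction.
\end{itemize}
Thus $i\neq j$.

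The main point requiring care is the preliminary step establishing $0<W<1$ and hence $0<\phi(W)<1$. This is exactly where the hypothesis $|B|\geq 6$ (guaranteeing a full gap $(b_2,b_3)$ inside $[0,1]$) and the distinctness of the points are used. I also need to note that $i$ and $j$ index points in $\{b_1,\dots,b_z\}$, which follows from $0\leq x_1<x_2\leq 1=b_z$. This ensures that the elementary interval $[b_i,b_{i+1}]$ lies within $[0,1]$, where $\Ieven$ is defined. The rest is the short case computation above.
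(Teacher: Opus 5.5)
Your proof is correct and follows the paper's argument. You assume $i=j$, split on whether $[b_i,b_{i+1}]$ lies in $\Ieven$ or in a gap, and force $W\in\{0,1\}$ or $\phi(W)\in\{0,1\}$ in each of the four subcases; your explicit check that $0<W<1$ (via $[b_1,b_2]$ and the gap $(b_2,b_3)$) and that $\phi$ maps $(0,1)$ into $(0,1)$ fills in a step the paper only asserts.
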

\begin{proof}
Assume for contradiction that $i = j$. Then $b_i\leq x_1<x_2 \leq b_{i+1}$. 
Since $|B| \geq 6$ and all points in $B$ are distinct, we have $W \neq 0$ and $W \neq 1$. 
Similarly, $\phi(W) \neq 0$ and $\phi(W) \neq 1$. We distinguish two cases, 
each leading to a contradiction.
\medskip

\noindent \emph{Case~I: $x_1$ and $x_2$ satisfy property (i) from \autoref{lem:continuous}.} 
If $b_i$ is even, then $\|\Iin \cap \Ieven\| = 0$, which contradicts $W \neq 0$ and $\phi(W) \neq 0$. If $b_i$ is odd, we have $\|\Iin \cap \Ieven\| = \|\Iin\|=\phi(W)$, which contradicts that $W \neq 1$ and $\phi(W) \neq 0$.
\medskip

\noindent \emph{Case~II: $x_1$ and $x_2$ satisfy property (ii) from \autoref{lem:continuous}.} 
If $b_i$ is even, then $\|\Iout \cap \Ieven\| = \|\Ieven\|= W$, which contradicts $W \neq 0$ 
and $\phi(W) \neq 1$. If $b_i$ is odd, then $\|\Iout \cap \Ieven\| = \| \Ieven\| - \| \Iin\| = W - (1 - \phi(W))$. In the latter case 
we have $W - (1 - \phi(W)) = W \cdot \phi(W)$ by property~(ii) of \autoref{lem:continuous}.
But then $W(1 - \phi(W)) = 1 - \phi(W)$, which contradicts $W \neq 1$ and $\phi(W) \neq 1$.
\end{proof}
We now choose the splitter $b_s\in\{b_i,b_{i+1}\}$ as follows.
\begin{enumerate}[label=\textbf{(S.\arabic*)}, labelwidth=*, align=left, itemindent=0pt, labelsep=1em, leftmargin=*, widest=9]
    \item \label{S.1} If $x_1$ and $x_2$ have property~(i) from  \autoref{lem:continuous},
          we pick $b_s:=b_i$ if $\{b_0,\ldots,b_i\}$ has even size and we pick $b_s := b_{i+1}$ if it has odd size.
          Thus, $b_0$ and $b_s$ have opposite parity.
    \item \label{S.2} If $x_1$ and $x_2$ have property~(ii) from  \autoref{lem:continuous},
          we pick $b_s=b_i$ if $\{b_0,\ldots,b_i\}$ has odd size and we pick $b_s := b_{i+1}$ if it has even size.
          Thus, $b_0$ and $b_s$ have the same parity.
\end{enumerate}
After picking~$b_s$, we pick $b_t\in\{b_j,b_{j+1}\}$ such that $\{b_s,\ldots,b_t\}$ has even size.
The next lemma shows that $b_s,b_t$ is a good pair of splitters when 
$b_s$ and $b_t$ are chosen according to~\ref{S.1}; 
the case where $b_s$ and $b_t$ are chosen according to~\ref{S.2} will be handled after that.
\begin{lemma}\label{lem:S.1}
If $b_s$ and $b_t$ are chosen according to~\ref{S.1}, they are a good pair of splitters.    
\end{lemma}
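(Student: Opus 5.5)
The plan is to compute the cost of the inner- or outer-wrapped matching per unit of length, as a sum over the elementary segments $[b_k,b_{k+1}]$, and to compare it with $\opt(\Bin)$. Under \ref{S.1}, $b_s$ and $b_0$ have opposite parity, so there are two cases.

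\emph{Case A: $b_0$ is even.} Then $b_s$ is odd and the block is inner-wrapped. Here $\Bin$ starts at $b_1$, so $\opt(\Bin)=W$; the possible edge $b_zb_{z+1}$ is a connector or an optimal edge and can be ignored. Outside $[b_s,b_t]$ the matching uses optimal edges, which are the segments of $\Ieven$. Inside $[b_s,b_t]$ it uses the wrapping edge $b_sb_t$ and the edges complementary to $\Ieven$. Hence every unit of length in the inner part that lies outside $\Ieven$ is charged twice, every unit in the inner part inside $\Ieven$ is charged once, and the outer part is charged only on $\Ieven$.

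\emph{Case B: $b_0$ is odd.} Then $b_s$ is even and the block is outer-wrapped. Now $\Bin$ starts at $b_0$, and the optimal edges are $b_0b_1$ together with the gaps of $\Ieven$ inside $[0,1]$, possibly plus $b_zb_{z+1}$. The two wrapping edges $b_0b_s$ and $b_tb_b$ cover $[b_0,b_1]$ (and $[b_z,b_{z+1}]$) exactly once, and that part is also paid by $\opt$. So these end pieces contribute at most a factor~$1$, and it suffices to bound the part inside $[0,1]$ against $1-W$. There, outer length in $\Ieven$ is charged twice, outer length outside $\Ieven$ once, and the inner part only outside $\Ieven$.

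In the idealized continuous setting, put the splitters exactly at $x_1,x_2$ and use property~(i) of \autoref{lem:continuous}, namely $\|\Iin\|=\phi(W)$ and $\|\Iin\cap\Ieven\|=W\phi(W)$.
\begin{itemize}
\item In Case A the cost becomes $W+2\phi(W)(1-W)$. This is at most $2W$ iff $\phi(W)\le W/(2(1-W))$. That holds with equality for $W\le 1/2$. For $W>1/2$ it reduces to $(3W-1)(1-W)\le W^2$, i.e.\ $(2W-1)^2\ge 0$.
\item In Case B the cost inside $[0,1]$ becomes $\phi(1-W)+(1-\phi)+W(1-\phi)$, writing $\phi$ for $\phi(W)$. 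This is at most $2(1-W)$ iff $2W\phi(W)\ge 3W-1$. That holds with equality for $W>1/2$, and for $W\le 1/2$ it reduces to the same inequality $(2W-1)^2\ge0$.
\end{itemize}
So both inner-wrapped and outer-wrapped are within factor~$2$ in the continuous setting, which explains the choice of $\phi$.

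Next I would discretize: move $x_1$ to $b_s$ and $x_2$ to $b_t$, and show this never increases the cost. \autoref{lem:inoteqj} ensures $b_s<b_t$ and that the two movements happen in different elementary segments, so they can be treated independently. Condition~(i) of \autoref{def:good-splitters} holds by the choice of $b_t$.

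When $x_1$ moves to $b_s$, the portion between $x_1$ and $b_s$ switches from inner to outer, or vice versa. Because this portion lies in a single elementary segment, it is either entirely in $\Ieven$ or entirely outside it. The parity rule of \ref{S.1} is designed so that this segment is one whose per-unit charge does not increase after the move. In Case A, with $b_s$ odd, the only costly situation is gap length of $\Ieven$ moving into the inner part, and I would check that rounding never does this. In Case B I would do the symmetric check: $\Ieven$-length must not move into the outer part. The same check applies to $x_2$ and $b_t$, using that $\{b_s,\ldots,b_t\}$ has even size, so $b_t$ is an even point when $b_s$ is odd and vice versa.

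I expect this rounding step to be the main obstacle. The plan is to tabulate the four cases (parity of $b_0$ crossed with rounding left or right) for each splitter, and in each case verify that the switched segment has per-unit cost that does not increase. The one place where this might fail is a splitter's rounding direction that is forced by parity while the segment type is unfavourable. If that happens, I would fall back on the fact that the continuous bound has slack except exactly at the balance point, and argue via the exact equalities in \autoref{lem:continuous}.
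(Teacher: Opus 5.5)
Your route is the paper's. You use the same two cases ($b_0$ even and inner-wrapped, $b_0$ odd and outer-wrapped) and the same per-unit charging. That charging gives cost $W+2\|[b_s,b_t]\setminus\Ieven\|$ in Case~A and $|b_0b_1|+(1-W)+2\|([0,1]\setminus[b_s,b_t])\cap\Ieven\|$ in Case~B. Your two continuous conditions are exactly the paper's $2\phi(W)(1-W)\le W$ and $2W(1-\phi(W))\le 1-W$.

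What is missing is the rounding step. You flag it as the main obstacle and leave it unverified; the paper compresses it into the first ``$\le$'' of each case. Your fallback would not rescue it. The continuous bound is tight for every $W\le\tfrac12$ in Case~A and for every $W>\tfrac12$ in Case~B, so there is no slack: rounding must be exactly cost-non-increasing.

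It is, by one observation. Rule~\ref{S.1} picks $b_s=b_i$ when $i$ is odd and $b_s=b_{i+1}$ when $i$ is even. So $s$ is always odd as an index within the block, whatever the global parity of $b_0$. Hence $t$ is even, $b_s$ is a left endpoint of an $\Ieven$-segment, and $b_t$ is a right endpoint of one.
\begin{itemize}
\item If $i$ is odd, the piece $[b_i,x_1]$ that joins the inner part lies in $[b_i,b_{i+1}]\subseteq\Ieven$.
\item If $i$ is even, the piece $[x_1,b_{i+1})$ that leaves the inner part lies in the gap $[b_i,b_{i+1}]$.
\item Likewise, $j$ even gives $b_t=b_j$, and the piece leaving the inner part is a gap piece.
\item And $j$ odd gives $b_t=b_{j+1}$, and the piece joining the inner part is an $\Ieven$ piece.
\end{itemize}
\autoref{lem:inoteqj} gives $s\le i+1\le j\le t$, and parity then gives $s<t$. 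So the two moves happen in different elementary segments and do not interact, which yields
\[
\|[b_s,b_t]\cap\Ieven\|\ \ge\ \|\Iin\cap\Ieven\| \qquad\text{and}\qquad \|[b_s,b_t]\setminus\Ieven\|\ \le\ \|\Iin\setminus\Ieven\| .
\]
The second inequality settles Case~A and the first settles Case~B. The bad situation you anticipated (a rounding direction forced by parity onto an unfavourable segment) cannot occur. The rounding direction and the type of the crossed segment are governed by the same parity of $i$ (resp.\ $j$), so no tabulation over the parity of $b_0$ is needed. With this inserted, your proof is complete.
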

\begin{proof}
The sub-block $\{b_s,\ldots,b_t\}$ has even size by construction, so it remains to 
prove property~(ii) of \autoref{def:good-splitters}.
\smallskip

\noindent \emph{Case~I: $b_s$ is currently an odd point in~$P$.} 
Let $\Min$ denote the inner-wrapped matching of~$\Bin$.
We must show that $\weight(\Min) \leq 2 \cdot \opt(\Bin)$. 
Since we picked $b_s$ according to~\ref{S.1}, we know that the parity of $b_0$ 
is opposite to the parity of~$b_s$. Hence, $b_0$ is an even point, as in \autoref{fig:Case-I}(a).
Depending on whether $|B|$ is odd or even, $\Bin$
ends at $b_{z+1}$ and we have $\opt(\Bin) = \|\Ieven\|+|b_z b_{z+1}|=W+|b_z b_{z+1}|$, 
or $\Bin$ ends at $b_{z}$ and we have $\opt(\Bin) = \|\Ieven\|=W$.
The former case is shown in \autoref{fig:Case-I}(a). For now, assume the latter case,
that is, assume that $|B|$ is even so that $\opt(\Bin) =W$.
We can then bound $\weight(\Min)$ as follows, where (as before) $\Iin := [x_1,x_2]$.
\begin{figure}
\centering
\includegraphics{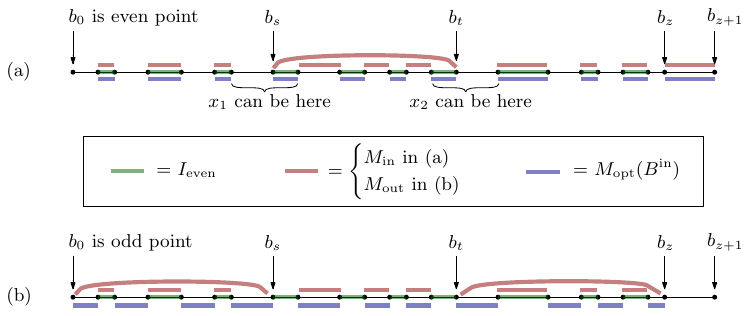}
\caption{Illustration for the proof of \autoref{lem:S.1}.}
\label{fig:Case-I}
\end{figure}
\[
\weight(\Min)
\leq   W + 2\cdot(\|\Iin\|- \|\Iin \cap \Ieven\|) 
 =   W + 2\cdot(\phi(W) - W\cdot \phi(W)),
\]
where the last equality follows from property~(i) in \autoref{lem:continuous}.
We have two cases, depending on the value of~$W$. 
\begin{itemize}
\item 
If $W\leq \tfrac{1}{2}$ then $\phi(W) = \tfrac{W}{2(1-W)}$ and so
\[
2\cdot(\phi(W) - W\cdot \phi(W)) = 2(1-W)\cdot\tfrac{W}{2(1-W)} = W.
\]
We can conclude that $\weight(\Min) \leq 2W = 2\cdot \opt(\Bin)$, as desired. 
\item
On the other hand, if $W> \tfrac{1}{2}$ then $\phi(W) = \tfrac{3W-1}{2W}$ and so
\[
2\cdot(\phi(W) - W\cdot \phi(W)) = 2(1-W)\cdot\tfrac{3W-1}{2W} = \tfrac{(1-W)(3W-1)}{W}
\]
It is easily verified that $\tfrac{(1-W)(3W-1)}{W} \leq W$ for any $0<W\leq 1$.
Hence, $\weight(\Min) \leq 2W = 2\cdot \opt(\Bin)$.
\end{itemize}
In the derivations above we assumed $|B|$ is even.
Now suppose $|B|$ is odd. Then, compared to the case where $|B|$ is even, both
$\weight(\Min)$ and $\opt(\Bin)$ increase by~$|b_z b_{z+1}|$. This decreases
the ratio $\weight(\Min)/\opt(\Bin)$, so we still
have $\weight(\Min)\leq 2\cdot \opt(\Bin)$. 
\medskip

\noindent \emph{Case~II: $b_s$ is currently an even point in~$P$.}
The proof is very similar to that of the previous case.
Let $\Mout$ denote the outer-wrapped matching of~$\Bin$. We must show that 
$\weight(\Mout) \leq 2 \cdot \opt(\Bin)$. We now have that $b_0$ must be an odd point, 
as in \autoref{fig:Case-I}(b). As before, depending on whether $|B|$ is 
odd or even, $\Bin$ can end at $b_z$---this is the case shown in \autoref{fig:Case-I}(b)---or
at $b_{z+1}$. We assume the former case; otherwise, $\weight(\Min)$ and 
$\opt(\Bin)$ both increase by~$|b_z b_{z+1}|$, which only decreases
the ratio $\weight(\Min)/\opt(\Bin)$.

If $\Bin$ ends at $b_z$ then $\opt(\Bin) = |b_0 b_1| \; + \;  1- \|\Ieven\| = |b_0 b_1| \; + \;  1-W$. 
We can now bound $\weight(\Mout)$ as follows, where $\Iout := [0,x_1)\cup (x_2,1]$.
\[
\begin{array}{lll}
\weight(\Mout)
 & \leq & |b_0 b_1| \; + \; (1-W) \; + \; 2\cdot\|\Iout \cap \Ieven\| \\[1mm]
 & = & |b_0 b_1| + (1-W) + 2\cdot (\|\Ieven\| - \|\Iin\cap \Ieven\|) \\[1mm]
 & = & |b_0 b_1| \; + \; (1-W) \; + \; 2\cdot  (W- W\cdot\phi(W)) \\[1mm]
 & = & |b_0 b_1| \; + \; (1-W) \; + \; \tfrac{2W(1-\phi(W))}{1-W} \cdot (1-W)
\end{array}
\]
Hence, it suffices to show that $\tfrac{2W(1-\phi(W))}{1-W}\leq 1$.
\begin{itemize}
\item 
If $W\leq \tfrac{1}{2}$ then $\phi(W) = \tfrac{W}{2(1-W)}$ and so
\[
\tfrac{2W(1-\phi(W))}{1-W} = \tfrac{2W}{1-W}\left(1-\tfrac{W}{2(1-W)}\right) = \tfrac{W(2-3W)}{(1-W)^2} \leq 1.
\]
\item
On the other hand, if $W> \tfrac{1}{2}$ then $\phi(W) = \tfrac{3W-1}{2W}$ and so
\[
\tfrac{2W(1-\phi(W))}{1-W} = \tfrac{2W}{1-W}\left(1-\tfrac{3W-1}{2W}\right) = 1.
\]
\end{itemize}
We conclude that the pair $b_s,b_t$ has the required property in all cases.
\end{proof}
The next lemma handles the case where $b_s$ and $b_t$ are chosen according to~\ref{S.2}.
\begin{lemma}\label{lem:S.2}
If $b_s$ and $b_t$ are chosen according to~\ref{S.2}, they are a good pair of splitters.    
\end{lemma}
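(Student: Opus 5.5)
The plan is to mirror the proof of \autoref{lem:S.1}, with the roles of the inner and outer parts exchanged. Under \ref{S.2}, the pair $x_1,x_2$ satisfies property~(ii) of \autoref{lem:continuous}: $\|\Iout\|=\phi(W)$ and $\|\Iout\cap\Ieven\|=W\phi(W)$. Consequently $\|\Iin\|=1-\phi(W)$ and $\|\Iin\cap\Ieven\|=W-W\phi(W)$. The sub-block $\{b_s,\ldots,b_t\}$ has even size by construction, so only property~(ii) of \autoref{def:good-splitters} needs checking.

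Since $b_0$ and $b_s$ now have the \emph{same} parity, there are two cases.

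\emph{Case~I: $b_s$ is odd.} Then $b_0$ is odd and $B$ is inner-wrapped. As in Case~II of \autoref{lem:S.1}, $\opt(\Bin)=|b_0b_1|+1-W$, assuming $\Bin$ ends at $b_z$. The elementary edges used are the complementary ones, lying in $[0,1]\setminus\Ieven$. The wrapping edge $b_sb_t$ replaces, inside $[x_1,x_2]$ (up to the rounding to $b_s,b_t$), the segments of $\Ieven$ with complementary segments. Counting doubly covered length as in the earlier bound, I expect
\[
\weight(\Min)\leq |b_0b_1|+(1-W)+2\|\Iin\cap\Ieven\| = |b_0b_1|+(1-W)+2(W-W\phi(W)).
\]
This is exactly the expression already shown to be at most $|b_0b_1|+2(1-W)$ in \autoref{lem:S.1}, and the inequality $\tfrac{2W(1-\phi(W))}{1-W}\le 1$ was verified there for both ranges of~$W$.

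\emph{Case~II: $b_s$ is even.} Then $b_0$ is even, $\opt(\Bin)=W$ (if $|B|$ is even), and $B$ is outer-wrapped. The two wrapping edges $b_ab_s$ and $b_tb_b$ cover the outer part. The extra cost is twice the length of the outer part not covered by $\Ieven$, giving
\[
\weight(\Mout)\leq W+2(\|\Iout\|-\|\Iout\cap\Ieven\|)=W+2(\phi(W)-W\phi(W)),
\]
which is at most $2W$ by the computation in Case~I of \autoref{lem:S.1}. If $|B|$ is odd, the extra edge $b_zb_{z+1}$, and in Case~I possibly $b_0b_1$, adds the same amount to both sides and only lowers the ratio.

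The main obstacle is justifying these weight bounds precisely, given that the splitters are rounded from $x_1,x_2$ to neighbouring points $b_i$ or $b_{i+1}$ (and $b_j$ or $b_{j+1}$). The rule in \ref{S.2} is designed so that the rounding moves a splitter only across a gap which, under the current parity, belongs to $\Mopt$. I would check this gap by gap: the gap $[b_i,b_{i+1}]$ containing $x_1$ is either counted in $\Mopt$ or not, and in each subcase the portion reassigned between the inner and outer part does not increase the overcount. The same check applies at $x_2$, where $b_t$ is forced by parity and is therefore consistent with $b_s$. In the outer-wrapped case the boundary blocks $\{b_a,\ldots,b_s\}$ with $b_a\in\{b_0,b_1\}$ also need care, so that the $[b_0,b_1]$ piece is accounted for correctly. \autoref{lem:inoteqj} guarantees $b_s<b_t$ in all of these subcases.
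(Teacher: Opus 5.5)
Your proposal follows the paper's proof essentially verbatim: the same two cases under \ref{S.2} (with $b_0$ sharing the parity of $b_s$), the same bounds $|b_0b_1|+(1-W)+2(W-W\phi(W))$ in the inner-wrapped case and $W+2(\phi(W)-W\phi(W))$ in the outer-wrapped case, and the same reuse of the two inequalities verified in \autoref{lem:S.1}; like you, the paper does not spell out the rounding from $x_1,x_2$ to $b_s,b_t$. One correction to your heuristic for that step: the rounding can cross a gap that is not in $\Mopt$ (e.g.\ $b_s=b_{i+1}$ with $i$ odd in the inner-wrapped case), but only when it shrinks the wrapped region, whereas every enlargement of the wrapped region adds only $\Mopt$ length, so your check that the overcount does not increase does go through.
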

\begin{proof}
The proof is symmetric to the proof of \autoref{lem:S.1}, but we give it for completeness.
The sub-block $\{b_s,\ldots,b_t\}$ has even size by construction, so it remains to 
prove property~(ii) of \autoref{def:good-splitters}.
\smallskip

\noindent \emph{Case~I: $b_s$ is currently an odd point in~$P$.} 
Let $\Min$ denote the inner-wrapped matching of~$\Bin$ and let $\opt(\Bin) := \weight(\Mopt(\Bin))$.
We must show that $\weight(\Min) \leq 2 \cdot \opt(\Bin)$. 
Since we picked $b_s$ according to~\ref{S.2}, we know that $b_0$ has the
same parity as $b_s$.
Thus, $b_0$ is an odd point and $\opt(\Bin) = |b_0 b_1| + 1- \|\Ieven\|=|b_0 b_1| +1-W$. 
(Here we assume, as in the proof of \autoref{lem:S.1},
that $\Bin$ ends at $b_{z}$, since otherwise the ratio $\weight(\Min)/\opt(\Bin)$ is even smaller.)
We bound $\weight(\Min)$ as follows.
\begin{align*}
  \weight(\Min) & \ \leq \ |b_0 b_1| + (1-W) + 2\cdot  \|\Iin \cap \Ieven\| \\
                & \ = \ |b_0 b_1| + (1-W) + 2\cdot (\|\Ieven\| - \|\Iout\cap \Ieven\|) \\
                & \ = \ |b_0 b_1| + (1-W) + 2 \cdot (W - W\cdot \phi(W)) \tag*{\llap{(by \autoref{lem:continuous}(ii))}}\\
                & \ = \ |b_0 b_1| + (1-W) + \tfrac{2W(1-\phi(W))}{1-W} (1-W) 
\end{align*}
Thus, it remains to show that $\tfrac{2W(1-\phi(W))}{1-W} \leq 1$, which we already did in Case~II of \autoref{lem:S.1}.

\smallskip

\noindent \emph{Case~II: $b_s$ is currently an even point in~$P$.}
Let $\Mout$ denote the outer-wrapped matching of~$\Bin$. We must show that 
$\weight(\Mout) \leq 2 \cdot \opt(\Bin)$. We now have that $b_0$ must be an even point, 
and (assuming again that $\Bin$ ends at $b_{z}$) we have
$\opt(\Bin) = \|\Ieven\|=W$.
We can bound $\weight(\Mout)$ as follows.
\[
\begin{array}{lll}
\weight(\Mout)
 & \leq & W + 2\cdot(\|\Iout\| - \|\Iout \cap \Ieven\|)  \\
 & = & W + 2\cdot  ( \phi(W) - W \cdot \phi(W))  
\end{array}
\]
It remains to show that $2\cdot(\phi(W) - W\cdot \phi(W)) \leq W$,
which we already did in Case~I of \autoref{lem:S.1}.
\end{proof}
The following lemma summarizes the main result of this subsection.
\begin{restatable}{lemma}{splitters} \label{lem:splitters}
Let $B$ be a block with $|B|\geq 6$. Then a good pair of splitters for $B$
exists and can be computed in~$O(|B|)$ time.
\end{restatable}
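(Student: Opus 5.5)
The lemma follows by assembling the preceding results, so the plan is a short chain of them plus a running-time check.

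First, normalize the block. Write $B=\{b_0,\ldots,b_{z+1}\}$ and translate and scale so that $b_1=0$ and $b_z=1$. Scaling does not change the ratio $\weight(\cdot)/\opt(\Bin)$ for any matching inside the block, so this is harmless. Since $|B|\geq 6$ we have $z\geq 4$, so $\Ieven$ is well defined and nonempty, and $W=\|\Ieven\|\in(0,1)$. Computing $W$ and $\phi(W)$ takes $O(|B|)$ time.

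Next, apply \autoref{lem:continuous} to obtain $x_1<x_2$ satisfying property~(i) or~(ii); this takes $O(|B|)$ time. Then locate indices $i,j$ with $b_i\leq x_1<b_{i+1}$ and $b_j<x_2\leq b_{j+1}$ by a linear scan. By \autoref{lem:inoteqj} we have $i\neq j$, and since $x_1<x_2$ this gives $i<j$. Choose $b_s$ according to \ref{S.1} or \ref{S.2}, depending on which property holds, and then choose $b_t\in\{b_j,b_{j+1}\}$ so that $\{b_s,\ldots,b_t\}$ has even size. Both choices are $O(1)$ work once $i,j$ and the parities are known.

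The one point needing care is checking that this choice is well defined and satisfies $b_s<b_t$. Because $i<j$, we have $b_s\leq b_{i+1}\leq b_j\leq b_t$. Equality $b_s=b_t$ can only occur when $j=i+1$ and $b_s=b_t=b_{i+1}$. But $b_{i+1}$ and $b_{i+2}$ have opposite parity, so exactly one of $b_j,b_{j+1}$ yields an even-sized sub-block starting at $b_s$. A single point is an odd-sized set, so the even-size rule never selects $b_t=b_s$; it picks $b_t=b_{i+2}$. Hence $b_s<b_t$. All chosen points lie in $\{b_1,\ldots,b_z\}\subseteq\Bin$, as the text already notes.

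Finally, \autoref{lem:S.1} in case \ref{S.1} and \autoref{lem:S.2} in case \ref{S.2} show that $b_s,b_t$ form a good pair of splitters. This holds for whichever parity $b_s$ currently has, which is exactly \autoref{def:good-splitters}. Summing the linear-time steps gives the $O(|B|)$ bound.
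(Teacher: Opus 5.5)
Your proposal is correct and follows the same route as the paper: obtain $x_1,x_2$ from \autoref{lem:continuous}, round them to $b_s,b_t$ via \ref{S.1} or \ref{S.2}, and invoke \autoref{lem:S.1} and \autoref{lem:S.2}, with every step taking $O(|B|)$ time. Your explicit checks that $i<j$, that $b_s<b_t$, and that both splitters lie in $\Bin$ fill in details that the paper's two-line proof leaves implicit.
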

\begin{proof}
Lemmas~\ref{lem:S.1} and \ref{lem:S.2} together imply that 
the pair $b_s,b_t$ defined above is a good pair of splitters in all cases. 
By \autoref{lem:continuous} we can compute the points~$x_1,x_2$ in $O(|B|)$,
which immediately implies that we can compute $b_s,b_t$ in $O(|B|)$ time as well.
\end{proof}

\subsection{Maintaining the structure}
\label{subsec:maintenance}
The idea behind the update algorithm is simple: we update the blocks containing the
two inserted or deleted points from scratch and toggle the other blocks. Since the
size of a block is $O(\sqrt{n})$ and the number of blocks is $O(\sqrt{n})$,
and toggling a block changes only $O(1)$ edges, this means that the update
algorithm is $O(\sqrt{n})$-stable. 
Making this idea work is not straightforward, however, and requires two additional invariants
on the block decomposition. Below we describe the details.

\subparagraph{Balanced block decompositions.}
For a given non-empty set $P$ of points, with $n \defeq |P|$ and $m \defeq \floor{\sqrt{n}}$, we call a block decomposition $\B$ of $P$ \emph{balanced} if it satisfies the following three conditions, where the latter two are used in order to bound the number of blocks of the smallest and of the largest size allowed by the first condition. This will enable the algorithm to efficiently react in case $m$ changes after an event:
    \begin{enumerate}[label=\textbf{(B.\arabic*)}, labelwidth=*, align=left, itemindent=0pt, labelsep=1em, leftmargin=*, widest=9]
        \item For all $B\in \B$, it holds that $m \leq |B| \leq 2m + 3$.  \label{inv:B1}
        \item It holds that $k_1 \leq (m + 1)^2 - n - 1$, where $k_1$ is the number of blocks of size $m$ in $\B$. \label{inv:B2}
        \item Similarly, $k_2 \leq n - m^2$, where $k_2$ is the number of $B \in \B$ with $|B| \geq 2m + 2$. \label{inv:B3}
    \end{enumerate}
We describe how to update the matching, and show how to efficiently rebalance the block decomposition when a pair of points is inserted into or deleted from~$P$, so that all invariants are maintained. While our focus is on the stability of the update algorithm, we will also discuss its running time.

To implement our update algorithm, we keep an ordered list containing
the blocks in the current block decomposition~$\B(t)$ and we maintain the following information
for each block~$B_i\in \B(t)$:
a sorted list of all points in~$B_i$, the size of~$B_i$, the parity of $\lp{B_i}$, 
the status of $B_i$ (unwrapped, inner-wrapped, or outer-wrapped) and $\Malg(\Bin_i)$ 
(the matching used in the interior of~$B_i$). We also maintain a list $\C$ of connectors that $\Malg$ uses.

\subparagraph{The update algorithm.}
Suppose we want to insert or delete a pair of points~$q_1,q_2$ into or from
the current point set~$P(t)$. Thus, $P(t+1) := P(t)\cup \{q_1,q_2\}$ or
$P(t+1) := P(t)\setminus \{q_1,q_2\}$.
Let $\B(t)$ be a balanced block decomposition, where initially $\B(0) = \emptyset$, and assume that $\Malg(t)$ adheres to the connector invariant and the wrapping invariant. 
\medskip

If $|P(t+1)| \leq 2$ then we make $\B(t+1)$ 
into a single block (or $0$ blocks), and match the two points in the block when $|P(t+1)| = 2$. Clearly this satisfies the wrapping invariant.

If $|P(t+1)| > 2$ we update $\B$ and $\Malg$ in three steps.
\begin{enumerate}
\item \label{step1}
      We update the block decomposition, without yet updating
      the matching, as follows. We first insert or delete $q_1$. Let $n$ be the number of points remaining after the insertion or deletion of $q_1$ from $P(t)$. Let $m \defeq \floor{\sqrt{n}}$. Depending on whether $q_1$ is inserted or deleted, we do the following:

      \noindent \emph{Case~I: $q_1$ is inserted.} We add $q_1$ to a suitable block~$B_i$, namely the block such that $\lp{B_i} < q_1 < \rp{B_i}$ or, if there is no such block, the block closest to~$q_1$. If $|B_i| \geq 2(m + 1)$ after the insertion then we split $B_i$ into two blocks, one of size $\floor{|B_i| / 2}$ and the other of size $\ceil{|B_i| / 2}$. Afterwards, if there exists any block $B_j$ in our new block decomposition with $|B_j| \leq m$, then let $B_k$ be a neighboring block to $B_j$, which must exist since $m < n$ (which in turn holds since $n \geq 3$). We merge $B_j$ and $B_k$ into a single block, and then if the new block contains at least $2(m + 1)$ points, we split this new block in the same manner as we split $B_i$. 
      Note that this is only done for one block $B_j$, even when there are many of size at most $m$.
      
      \noindent \emph{Case~II: $q_1$ is deleted.} This case follows symmetrically to case~I. Let $B_j$ be the block that contains $q_1$. Remove $q_1$ from $B_j$, and then if after the deletion $|B_j| \leq m$, we merge $B_j$ with a neighboring block $B_k$ and split the resulting block if the size of the resulting block is at least $2(m + 1)$, just as we did in case~I. After, if there exists any block $B_i$ in our new block decomposition with $|B_i| \geq 2(m + 1)$, split it into two smaller blocks, as we split $B_i$ in case~I.

      After adding or removing $q_1$ as described above, we add or remove~$q_2$ in the same manner (but note that $n$ will differ by $1$).
      As we will show in \autoref{lem:buffer-rebalancing}, the resulting block decomposition~$\B(t+1)$ is balanced.
      So, $\B(t+1)$ contains $O(\sqrt{n})$ blocks, each of size~$O(\sqrt{n})$.
\item \label{step2} 
      Next, we update for each block $B_k$ the parity information for the point~$\lp{B_k}$.
      Since we maintain the blocks in an ordered list and we know the size of each block, 
      this takes $O(\sqrt{n})$ time. We then update the list~$\C$ of connectors, thereby
      re-establishing the connector invariant~\ref{inv:C}.
      This can be done in $O(\sqrt{n})$ time and changes $O(\sqrt{n})$ edges in $\Malg$.
\item \label{step3}
      Finally, we update the matching of the block interiors and re-establish the wrapping invariants.
      Note that Step~\ref{step1} resulted in~$O(1)$ changed or new blocks; all other
      blocks in $\B(t+1)$ contain the same points as they did in $\B(t)$. 
      \begin{itemize}
      \item
      We construct the matching in each changed or new block
      from scratch, according to the wrapping invariants~\ref{inv:W1}--\ref{inv:W3}.
      \autoref{lem:splitters} guarantees that we can do so in $O(\sqrt{n})$ time.
      Clearly, we modify $O(\sqrt{n})$ edges in $\Malg$ in this step.
      \item
      Next, we consider the blocks whose point set did not change. Even though the point set
      of such a block~$B_k$ did not change, we may still have to change its matching in order
      to satisfy the wrapping invariant. More precisely, if $|B_k|\geq 6$ and the parity of $\lp{B_k}$
      changes, then the parity of $\lsplit{B_k}$ changes as well. When this happens we need
      to toggle $B_k$---that is, change $B_k$ from being inner-wrapped to being outer-wrapped,
      or vice versa---to satisfy the wrapping invariant. Fortunately, this involves changing only $O(1)$
      edges in the matching of~$B_k$, as discussed before and illustrated in \autoref{fig:toggle}. 
      Updating the matching in blocks $B_k$ with $|B_k|<6$ obviously changes only
      $O(1)$ edges as well.
      \end{itemize}
\end{enumerate}
This finishes the description of the update algorithm. 

\subparagraph{Correctness proof.}
It is clear from the description that the algorithm runs in $O(\sqrt{n})$
time and that it changes only $O(\sqrt{n})$ edges in the matching,
provided all invariants are maintained.
For the connector and wrapping invariants this is by construction. It remains to prove that
the rebalancing method used in the algorithm guarantees
that there are $O(\sqrt{n})$ blocks of size $O(\sqrt{n})$ each at all times. 
More precisely, we need to show that the three block invariants~\ref{inv:B1}--\ref{inv:B3} are maintained.
Intuitively, this holds due to the additional block merge during a point insertion
(and the block split during a point deletion), which turns out to be sufficient even 
when the threshold~$m$ changes. 
To see this, suppose that $m$ has increased after a point insertion. 
Since $m = \floor{\sqrt{n}}$, this implies that $n$ is a square number.
Then it must hold that there are $O(\sqrt{n})$ blocks of the lowest size, which 
is $m$ by definition. Now
imagine a buffer containing the blocks of size~$m$, which the algorithm implicitly maintains. 
By merging a block of size $m$ after each insertion, the size of this buffer decreases. 
Because $\sqrt{n}$ is a concave function, an additional $\Omega(\sqrt{n})$ point insertions must occur for $m$ to increase again.
By that moment, the buffer must have been emptied and the sizes of the blocks are adjusted to the new threshold~$m$. 
The next lemma formalizes this argument.
\begin{lemma}
    \label{lem:buffer-rebalancing}
     The update algorithm described above maintains the block invariants~\ref{inv:B1}--\ref{inv:B3}.
     Thus, the block decomposition $\B(t)$ maintained by the algorithm is balanced at all times.
\end{lemma}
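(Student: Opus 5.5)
We argue by induction over single-point operations: each pair update is two single-point insertions or deletions, so it suffices to show that if $\B$ is balanced for a set of size $n'$ and one point is added or removed (giving size $n=n'\pm1$, $m=\floor{\sqrt n}$), then the resulting decomposition is balanced for $n$. The base case is the trivial decomposition with at most two points. Throughout, let $k_1$ be the number of blocks of size exactly $m$ and $k_2$ the number of blocks of size at least $2m+2$.

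\emph{Insertion, $m$ unchanged.} This is the easy case. The block $B_i$ that receives $q_1$ grows by one. If it reaches $2m+2$ it is split into halves of size $m+1$, so \ref{inv:B1} and \ref{inv:B3} are kept, and splitting creates no block of size $m$. Next, one block of size $m$ (if any) is merged with a neighbour of size in $[m,2m+1]$, giving a block of size in $[2m,3m+1]$. If this is at least $2m+2$ it is split into two pieces of size at least $m+1$. The result is that $k_1$ decreases by one whenever it was positive. We must also check \ref{inv:B2}: the right-hand side $(m+1)^2-n-1$ drops by one, which matches the decrease of $k_1$. If $k_1$ was already $0$, the bound $(m+1)^2-n-1\ge 0$ holds because $n\le (m+1)^2-1$. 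For \ref{inv:B3}, the right-hand side $n-m^2$ grows by one, and the insertion can create at most one new block of size $\ge 2m+2$. The merge cannot create one, since the merged block is split if it gets that large. Split pieces have size at most $\ceil{(3m+1)/2}\le 2m+1$.

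\emph{Insertion, $m$ increases to $m+1$.} This happens exactly when $n=(m+1)^2$. Before the insertion, \ref{inv:B2} with $n'=(m+1)^2-1$ gives $k_1\le 0$, so every block has size at least $m+1$, which is the new lower bound. The old upper bound $2m+3$ equals the new bound $2(m+1)+1$, so \ref{inv:B1} survives. We then recheck the new \ref{inv:B2} and \ref{inv:B3} with the new $m$. The new \ref{inv:B3} asks that blocks of size $\ge 2m+4$ number at most $n-(m+1)^2=0$; there are none by the old \ref{inv:B1}, and an insertion can push at most one block to $2m+4$, which is immediately split. The new \ref{inv:B2} allows $(m+2)^2-n-1=2m+2$ blocks of size $m+1$. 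We bound this by a counting argument: since all blocks have size at least $m+1$, there are at most $n/(m+1)=m+1$ blocks in total, so the bound holds.

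\emph{Deletions.} These are symmetric, with the roles of \ref{inv:B2} and \ref{inv:B3} exchanged. The extra split after a deletion decreases $k_2$, matching the decrease of $n-m^2$. When $m$ drops, that is when $n'=m'^2$, the old \ref{inv:B3} gives $k_2=0$, so all blocks have size at most $2m'+1=2(m'-1)+3$. The new \ref{inv:B3} allows $n-(m'-1)^2=2m'-2$ large blocks, and this is bounded again by counting blocks, using that each has size at least $m'-1$.

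\emph{Main obstacle.} The delicate part is the threshold-change cases together with the split and merge arithmetic near the boundaries. We need the merged or split sizes to stay in $[m,2m+3]$ without creating new violations, including the small-$m$ corner cases where $n\ge 3$ but $m=1$. We also need the counting bounds on $k_1$ and $k_2$ right after the threshold change. We would first verify the inequality chains for general $m$ and then handle small $n$ by direct inspection.
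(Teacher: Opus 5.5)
Your proposal is correct and follows the paper's proof essentially step by step. The shared structure is induction over the two single-point operations in each update. When $m$ is unchanged, the forced merge (resp.\ split) absorbs the unit change in the right-hand side of \ref{inv:B2} (resp.\ \ref{inv:B3}). When $m$ changes, the old \ref{inv:B2} (resp.\ \ref{inv:B3}) forces $k_1=0$ (resp.\ $k_2=0$), and a simple block-counting bound handles the new $k_1$ (resp.\ $k_2$).

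One small slip: the neighbour used in the merge can be an untouched block of size $2m+2$ or $2m+3$, so the merged block lies in $[2m,3m+3]$ rather than $[2m,3m+1]$. Its split pieces still have size at most $\ceil{(3m+3)/2}\le 2m+1$, so nothing breaks.
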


\begin{proof}
    Consider some arbitrary input $P(0), \dots, P(n)$.
    We prove by induction on $t$ that for any time $t$ with $|P(t)| > 0$, it holds that $\B(t)$ is balanced. For $t = 0$, we have $|P(0)| = 0$, in which case it can easily be verified that conditions \ref{inv:B1}--\ref{inv:B3} hold. We now consider $t > 0$. First, if $|P(t)| = 2$, the algorithm assigns $\B(t) = \{P(t)\}$. In this case, $n = 2$, $m = 1$, $k_1 = 0$, and $k_2 = 0$. Again, it is easily verifiable that \ref{inv:B1}--\ref{inv:B3} hold. We consider the case where $|P(t)| > 2$. By the induction hypothesis we have that $\B(t - 1)$ is balanced. Consider that $q_1, q_2$ is the pair of points inserted in or deleted from $P(t - 1)$, with $q_1$ being the first point handled by the update algorithm. Let $P$ be the set of points after the insertion or deletion of $q_1$, and $\B$ be the block decomposition of $P$ maintained by the update algorithm after handling point $q_1$ in Step~\ref{step1}. We prove that $\B$ is balanced. Let $n \defeq |P|$ and $m \defeq \floor{\sqrt{n}}$. Let $k_1$ be the number of blocks in $\B$ of size $m$. Let $k_2$ be the number of blocks in $\B$ of size at least $2m + 2$. We analogously define the numbers $n', m', k_1',$ and $k_2'$ for $P(t - 1)$ and $\B(t - 1)$.

    We first consider the case where $m = m'$. Note that regardless whether $q_1$ is inserted or deleted, since condition \ref{inv:B1} holds for $\B(t - 1)$, any newly created block in $\B$, that is, any block in $B \in \B$ that is not in $\B(t - 1)$, always satisfies $m + 1 \leq |B| \leq 2m + 1$. 
    So $k_1 \leq k_1'$ and $k_2 \leq k_2'$, which implies that \ref{inv:B1} holds for $\B$, \ref{inv:B2} holds for $\B$ if $q_1$ was deleted, and \ref{inv:B3} holds for $\B$ if $q_1$ was inserted. Moreover, if $k_1' > 0$ we ensure that during an insertion a block of size $m$ is removed, which means that either $k_1 = 0$ or $k_1 \leq k_1' - 1 \leq (m + 1)^2 - n - 1$ (by \ref{inv:B2} for $\B(t - 1)$). Hence, \ref{inv:B2} also holds for $\B$ if $q_1$ inserted. Similarly, if $k_2' > 0$ then during a deletion, a block of size at least $2m + 2$ is removed, which similarly implies that \ref{inv:B3} also holds for $\B$ if $q_1$ was deleted.

    We now consider $m \neq m'$. We separately consider the cases where $q_1$ is inserted and when $q_1$ is deleted.
    
    \begin{itemize}
    
        \item \emph{Case~I: $q_1$ is inserted.} In this case, we must have that $m = m' + 1$ and $n = m^2$, due to $n = n' + 1$. Since \ref{inv:B2} holds for $\B(t - 1)$, we have $k_1' \leq (m' + 1)^2 - n' - 1 = m^2 - n = 0$. Using this and the fact that \ref{inv:B1} holds for $\B(t - 1)$ we get that for any block $B \in \B(t - 1)$ it holds that $m' + 1 \leq |B| \leq 2m' + 3$, so also $m \leq |B| \leq 2m + 1$. As discussed before, blocks $B \in\B \setminus\B(t - 1)$ must satisfy $m + 1 \leq |B| \leq 2m + 1$. Thus for all blocks $B\in\B$ it holds that $m \leq |B| \leq 2m + 1$, from which we see that both \ref{inv:B1} and \ref{inv:B3} hold for $\B$. In order to prove that \ref{inv:B2} also holds for $\B$, we are no longer able to use the induction hypothesis, as we do not get any information about the number $k_1$ of blocks of size $m$ from it. Despite this, we can give the following simple upper-bound on $k_1$: since $n = m^2$, then it must hold that $k_1 \leq m$. Moreover, \[(m + 1)^2 - n - 1 = m^2 + 2m + 1 - m^2 - 1 = 2m \geq k_1,\] so \ref{inv:B2} holds for $\B$.
        
        \item \emph{Case~II: $q_1$ is deleted.} In this case, we must have that $m = m' - 1$ and $n' = (m')^2$, due to $n = n' - 1$. Since \ref{inv:B3} holds for $\B(t - 1)$, we have $k_2' \leq n' - (m')^2 = 0$. Using this, we can show in the same manner as we did in Case~I that for all blocks $B$ in $\B$ it holds that $m + 1 \leq |B| \leq 2m + 3$, from which we see that both \ref{inv:B1} and \ref{inv:B2} hold for $\B$. To prove \ref{inv:B3}, we see that $n < (m + 1)^2$, thus it must hold that $k_2 \leq m + 1$. Additionally, we have $n - m^2 = (m + 1)^2 - 1 - m^2 = 2m \geq k_2$, so \ref{inv:B3} holds for $\B$.
    \end{itemize}

    In the same manner, given that $\B$ is balanced, we can prove that after inserting or deleting $q_2$ from $P$ we get that $\B(t)$ is balanced, since the algorithm handles $q_2$ identically. This concludes the proof.
\end{proof}
Having proved the correctness of our algorithm, we can now state the following result,
\begin{theorem}
\label{thm:two-approx-lb}
There is an $O(\sqrt{n})$-stable $2$-approximation algorithm for the dynamic minimum-weight 
perfect matching problem in~$\Reals^1$. The running time of the algorithm is $O(\sqrt{n})$
per insertion or deletion of a point pair.
\end{theorem}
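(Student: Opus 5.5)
The theorem is essentially an assembly of the lemmas already proved, so the plan is to combine them and then check the stability and running-time accounting step by step.

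\textbf{Correctness and approximation.} I would argue by induction on $t$ that after each update $\Malg(t)$ satisfies the connector invariant and the wrapping invariant with respect to a balanced block decomposition $\B(t)$. For the base cases $|P(t)|\leq 2$ this is immediate. For the inductive step:
\begin{enumerate}
\item \autoref{lem:buffer-rebalancing} shows that Step~\ref{step1} yields a balanced $\B(t+1)$.
\item Step~\ref{step2} re-establishes~\ref{inv:C} by construction.
\item In Step~\ref{step3}, every new or changed block of size at least~$6$ gets a good pair of splitters by \autoref{lem:splitters}, and is then matched according to \ref{inv:W1}--\ref{inv:W3}.
\item An unchanged block keeps its splitters. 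These remain a good pair, because \autoref{def:good-splitters} was verified for both parities of $p_s$ (the proofs of Lemmas~\ref{lem:S.1} and~\ref{lem:S.2} cover Case~I and Case~II). Toggling the block when the parity of $\lp{B_k}$ flips therefore restores~\ref{inv:W2}/\ref{inv:W3}.
\end{enumerate}
One subtle point needs checking: an unchanged block's interior $\Bin_k$ may change when the parity of $\lp{B_k}$ flips. This is exactly the situation the splitter analysis anticipates, since it bounds the ratio for both possible parities of $b_0$. \autoref{lem:2-approx} then gives $\weight(\Malg(t))\leq 2\cdot\opt(t)$.

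\textbf{Stability.} By~\ref{inv:B1}, every block has size between $m$ and $2m+3$ with $m=\floor{\sqrt{|P|}}$. Hence there are at most $|P|/m=O(\sqrt n)$ blocks, each of size $O(\sqrt n)$.
\begin{enumerate}
\item Step~\ref{step1} splits or merges only $O(1)$ blocks per handled point: one containing block, at most one split, and one merge followed by a possible split, done twice.
\item Rebuilding the matchings inside these $O(1)$ blocks changes $O(\sqrt n)$ edges.
\item Each other block changes $O(1)$ edges: a toggle, as in \autoref{fig:toggle}, or a small block of size less than~$6$. Each connector also changes $O(1)$ edges. Over $O(\sqrt n)$ blocks this totals $O(\sqrt n)$.
\end{enumerate}
The sum is $O(\sqrt n)$, and since $n$ changes by at most $2$ per update, the bound is stated correctly in terms of $\max(|P(t)|,|P(t+1)|)$.

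\textbf{Running time.} Locating the block that receives $q_1$ or $q_2$ by a linear scan over $O(\sqrt n)$ blocks, followed by insertion into a sorted list of size $O(\sqrt n)$, costs $O(\sqrt n)$. Splits and merges cost linear time in the block sizes. Recomputing splitters costs $O(|B|)=O(\sqrt n)$ by \autoref{lem:splitters}, and parity and connector updates cost $O(\sqrt n)$.

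A toggle must cost $O(1)$. Storing the splitters and the interior matching per block, a toggle only swaps the wrapped edges near $p_s$, $p_t$ and the block boundaries. The same holds when $\Bin_k$ shifts by one point at either end.

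\textbf{Main obstacle.} The step I expect to need the most care is verifying that toggling is genuinely $O(1)$ edge changes in all parity/size combinations. When the parity of $\lp{B_k}$ flips, the endpoints of $\Bin_k$ shift and the connectors change simultaneously. I would handle this by case analysis on the parity of $|B_k|$, comparing the inner-wrapped and outer-wrapped matchings explicitly. Their symmetric difference consists only of edges incident to $p_{a-1},p_a,p_s,p_{s+1},p_t,p_{t+1},p_b,p_{b+1}$, so it has constant size.
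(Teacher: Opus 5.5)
Your proposal is correct and follows the paper's own argument. The $2$-approximation comes from \autoref{lem:2-approx} once the connector and wrapping invariants are restored, and balance comes from \autoref{lem:buffer-rebalancing}. Splitters computed via \autoref{lem:splitters} stay good under parity flips, because Lemmas~\ref{lem:S.1} and~\ref{lem:S.2} cover both parities, so unchanged blocks need only an $O(1)$ toggle; with $O(\sqrt{n})$ blocks of size $O(\sqrt{n})$ and only $O(1)$ blocks rebuilt, this gives $O(\sqrt{n})$ stability and time. Your explicit check that a toggle touches only edges incident to a constant number of points near the block boundaries and the splitters fills in a detail the paper only asserts via \autoref{fig:toggle}.
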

\subparagraph{\rm \emph{Remark.}}
Instead of inserting a pair of points or deleting a pair of points, we can also handle
updates involving the insertion of one point, $q_1$, and the deletion of another point,~$q_2$. 
This can either be done directly, or by first inserting $q_1$ plus a dummy point
and then deleting $q_2$ and the dummy point.

\subsection{A lower bound on the approximation ratio} \label{subsec:lb-approx}
In the following, we show that our $O(\sqrt{n})$-stable $2$-approximation algorithm is optimal in terms of its approximation ratio among all algorithms with sublinear stability. 
\begin{theorem}
    \label{thm:2-approx-lower-bound}
    For no $\alpha<2$, there is an algorithm for the dynamic minimum-weight perfect 
    matching problem in $\mathbb{R}^1$ that is $o(n)$-stable and $\alpha$-approximate. 
\end{theorem}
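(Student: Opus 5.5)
We build an adversarial insertion sequence that forces any $\alpha$-approximate algorithm with $\alpha<2$ to change $\Omega(n)$ edges in a single update. The construction is the standard one of almost-uniformly spaced points whose optimal matching flips between the two alternating elementary matchings.

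\emph{Setup.} Fix $\alpha<2$ and choose $\eps>0$ small in terms of $2-\alpha$. Insert, pair by pair, the points $p_1<\dots<p_{2n}$ with $p_i := i$ for $1<i<2n$, but with the two outer gaps shrunk: $p_1 := 2-\eps$ and $p_{2n} := 2n-1+\eps$. Then $\opt = (n-1)+2\eps \approx n$ by \autoref{obs:optimal-matching}.

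\emph{Key structural claim.} Every matching of weight at most $\alpha\cdot\opt<2n-c$ must contain a linear number of the elementary edges $p_{2i-1}p_{2i}$. For a matching $M$ on these points, $\weight(M)$ equals the sum over the $2n-1$ gaps of (gap length)$\times$(number of edges of $M$ covering it). Each gap between $p_{2i}$ and $p_{2i+1}$ is crossed an even number of times. Each gap between $p_{2i-1}$ and $p_{2i}$ is crossed an odd number of times, so at least once. Let $k$ be the number of ``odd-type'' elementary edges $p_{2i-1}p_{2i}$ that are \emph{not} in $M$. Our aim is to show that each such missing edge forces extra coverage. The cleanest way is to charge: if $p_{2i-1}p_{2i}\notin M$, then the edges of $M$ at $p_{2i-1}$ and at $p_{2i}$ cover an even-type gap adjacent to them (either between $p_{2i-2}$ and $p_{2i-1}$ or between $p_{2i}$ and $p_{2i+1}$). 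That gap is then covered at least twice, contributing an extra length of about $2$. Each even gap is charged by at most two odd indices, so $\weight(M)\geq n + k - O(\eps n)$ roughly. Hence $\weight(M)\leq \alpha n$ forces $k\leq(\alpha-1)n+O(\eps n)$, i.e.\ $M$ contains at least $(2-\alpha)n - O(\eps n)$ odd-type elementary edges. I would refine the charging so the constant is exactly right, but any linear bound suffices.

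\emph{The flip.} Now insert one more pair: $q_1 := 1$ far to the left and $q_2 := 2n+\eps'$ slightly right of $p_{2n}$, chosen so that $q_1p_1$ has length $1-\eps$ and $p_{2n}q_2$ has length $1-\eps$. Rather than these specific lengths, choose them so that the new optimum is the \emph{other} alternating matching, $q_1p_1, p_2p_3,\dots,p_{2n}q_2$, of weight about $n+1$. It must be checked that this has weight $\le \alpha'$ times the old structure's alternative, namely that the old-type edges $p_{2i-1}p_{2i}$ together with the unavoidable long edge become expensive. By the symmetric structural claim applied to the new point set, any $\alpha$-approximate matching now contains a linear number of edges $p_{2i}p_{2i+1}$, which are pairwise incompatible with the edges $p_{2i-1}p_{2i}$. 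To go from $\Omega(n)$ old odd-type edges to $\Omega(n)$ new even-type edges, at least $\Omega(n)$ edges must be removed, so $|\Malg(t+1)\symdif\Malg(t)|=\Omega(n)$, contradicting $o(n)$-stability for large $n$.

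\emph{Main obstacle.} The delicate step is placing $q_1,q_2$ so that the even-type matching is optimal while the old odd-type one, which now must additionally route $q_1$ and $q_2$ together or via a long edge, has weight near $2\cdot\opt$. I would take $q_1 := 2-\eps-(1-\eps)=1$ and $q_2 := 2n-1+\eps+(1-\eps)=2n$, so the outer gaps have length $1-\eps$ while inner gaps are $1$, and redo the charging argument. The factor $2$ appears asymptotically as $\eps\to0$, using the total-weight-vs-gap-coverage identity.
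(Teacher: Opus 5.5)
Your construction is fine (it is essentially the paper's: roughly unit-spaced points, then one new point just beyond each end, so the parity of the optimal matching flips). Your charging proof of the structural claim is also correct. The gap is in the final step. The claim that the edges $p_{2i}p_{2i+1}$ are ``pairwise incompatible'' with the edges $p_{2i-1}p_{2i}$ is false. An edge $p_{2i}p_{2i+1}$ conflicts only with the two adjacent odd-type edges, so one matching can contain many edges of both types in different regions.

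Hence ``$\Malg(t)$ has $\geq(2-\alpha)n-O(1)$ odd-type edges and $\Malg(t+1)$ has $\geq(2-\alpha)n-O(1)$ even-type edges'' does not force a linear symmetric difference. The even-type edges of $\Malg(t+1)$ may already lie in $\Malg(t)$, and the odd-type edges of $\Malg(t)$ may survive. For instance, the old instance admits the matching $p_1p_2,\dots,p_{2j-1}p_{2j}$, $p_{2j+1}p_{2n}$, $p_{2j+2}p_{2j+3},\dots,p_{2n-2}p_{2n-1}$ with $j\approx(3-\alpha)n/2$. It has weight about $\alpha n$ and about $(\alpha-1)n/2$ even-type edges, which exceeds your guaranteed $(2-\alpha)n$ once $\alpha\geq 5/3$.

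No refinement of type counting can close this. Wrap $n/3$ disjoint groups $p_{4i+1},\dots,p_{4i+4}$ as $\{p_{4i+1}p_{4i+4},p_{4i+2}p_{4i+3}\}$ and use odd-type edges elsewhere. This gives weight about $\tfrac53 n$ with about $n/3$ odd-type, $n/3$ even-type and $n/3$ non-elementary edges. The analogous matching of the new instance, wrapping as $p_{4i}p_{4i+3}, p_{4i+1}p_{4i+2}$, has exactly the same counts. So any argument that only counts edge types fails for $\alpha\geq 5/3$.

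The missing idea is global. $\Malg(t)\symdif\Malg(t+1)$ has exactly two odd-degree vertices, $q_1$ and $q_2$. So it contains a $q_1$--$q_2$ path, of weight at least $|q_1q_2|=2n-1$. Combine this with
$\alg(t)+\alg(t+1)=\weight(\Malg(t)\symdif\Malg(t+1))+2\cdot\weight(\Malg(t)\cap\Malg(t+1))$.
The at least $n-f(n)$ shared edges all have weight at least about $1$, except $O(1)$ of them. This yields $\alg(t)+\alg(t+1)\gtrsim 4n-o(n)$, while $\opt(t)+\opt(t+1)= 2n-1$, contradicting $\alpha<2$.

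This is the paper's proof. It needs no perturbation of the end gaps: the paper uses $P(t)=\{1,\dots,2(n-1)\}$, $q_1=0$, $q_2=2n-1$. It also shows that placing $q_1,q_2$, the step you flag as the main obstacle, is not where the difficulty lies. A minor point: in your instance $\opt=n-2+2\eps$, not $n-1+2\eps$.
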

\begin{proof}
    Let $\alpha<2$ and assume for contradiction the existence of an $\alpha$-approximate 
    algorithm~\alg  that is  $f(n)$-stable for some $f(n) = o(n)$. Choose $n \in \mathbb{N}$ large enough such that $f(n) \leq (1-\frac{\alpha}{2})n$, and  
    suppose that at time $t=n-1$ we have $P(t) = \{1,2,\ldots,2(n-1)\}$ and that $P(t+1) = \{1,2,\ldots,2(n-1)\} \cup \{q_1,q_2\}$, where $q_1 \defeq 0$ and $q_2 \defeq 2n-1$; 
    see \autoref{fig:2-approx-lower-bound}. 
    \begin{figure}
    \centering
    \includegraphics{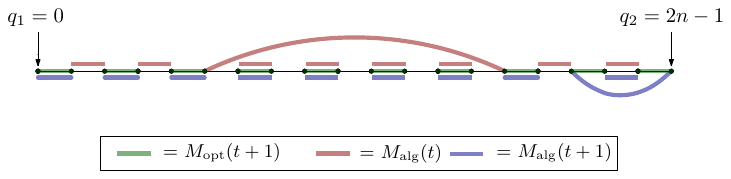}
    \caption{Counterexample used in \autoref{thm:2-approx-lower-bound}.}
    \label{fig:2-approx-lower-bound}
    \end{figure}
    From the stability of \alg and because any edge in $P(t+1)$ has weight at least $1$, we have
    \begin{align*}
    \weight(\Malg(t) \cap \Malg(t+1)) & \ \geq \ |\Malg(t) \cap \Malg(t+1)| \\[1mm]
       & \ = \ |\Malg(t) \cup \Malg(t+1)| - |\Malg(t) \,\Delta\, \Malg(t+1)| \\[1mm]
       & \ \geq \  n - f(n) \\[1mm]
       & \ \geq \ \tfrac{\alpha}{2} n.
    \end{align*}
    Since $\Malg(t+1)$ matches all points in $P(t+1)$ while $\Malg(t)$ matches all points except $\{q_1,q_2\}$, 
    the graph induced by the set $\Malg(t) \,\Delta\, \Malg(t+1)$ contains exactly two nodes with odd degree, namely $q_1$ and $q_2$. Thus, there exists a path $\pi \subseteq \Malg(t) \,\Delta\, \Malg(t+1)$ connecting $q_1$ and $q_2$.
    Clearly, $\weight(\pi) \geq 2n-1$. 
    Hence,
    \begin{align*}
        \alg(t) + \alg(t + 1)
        & \ = \ \weight(\Malg(t) \Delta \Malg(t+1)) + 2 \cdot \weight(\Malg(t) \cap \Malg(t+1)) \\
        & \ \geq \ 2n - 1 + \tfrac{\alpha}{2} (2n - 1) \\
        & \ = \ \left( 1 + \tfrac{\alpha}{2} \right) (2n - 1) \\
        & \ > \ \alpha \cdot (\opt(t) + \opt(t + 1)),
    \end{align*}
thus  $\alg(t) > \alpha \cdot \opt(t)$ or $\alg(t + 1) > \alpha \cdot \opt(t + 1)$,
giving the desired contradiction.
\end{proof}
\subparagraph{\rm \emph{Remark.}}
In the bipartite setting we can use the same construction, with points at even coordinates being servers and points at odd coordinates being clients. The algorithm is strictly more restricted in terms of the matchings it may maintain, whereas the weight of the optimal solution stays the same. Therefore \autoref{thm:2-approx-lower-bound} also holds in the bipartite setting.


\section{A lower bound on the stability} 
\label{sec:lb-stability}

In \autoref{subsec:lb-approx} we showed that any algorithm with a non-trivial upper bound on the stability must have an approximation ratio of at least $2$. We now ask the reverse question: Is there a lower bound on the stability of any algorithm that has a bounded approximation guarantee? We call the approximation guarantee of an algorithm \emph{bounded} if there exists some $\alpha:\mathbb{N}\to \mathbb{R}_{\ge 1}$ such that the algorithm is an $\alpha(n)$-approximation,
that is, such that at any time~$t$ we have $\alg(t)\leq \alpha(n)\cdot\opt(t)$, where $n := \tfrac{1}{2}|P(t)|$.

A simple example shows that the unique $1$-stable algorithm has unbounded approximation.
Assume for contradiction that it is an $\alpha(n)$-approximation for some function~$\alpha$. Consider the point set $P \defeq \{0, 1\}$ and the newly arriving points $q_1 \defeq \eps$ and $q_2 \defeq 1 + \eps$ for some $\eps < \frac{1}{\alpha(2)}$. The optimal matching has weight $2\eps$ but by $1$-stability, $\alg$ has to return the matching $\{(0,1),\, (\eps, 1+\eps)\}$, which has total weight $2 > \alpha(2) \cdot 2\eps$.

The idea can be generalized to algorithms with higher stability. Doing so requires more than two events, however, since there exists a $3$-approximate and $3$-stable algorithm for the two-stage version of our problem~\cite{online-metric-matching}. Suppose $\alg$ is $3$-stable and $\alpha(n)$-approximate, 
and consider the scenario in \autoref{fig:3stable-counterexample}, where $\eps$ is chosen sufficiently small. At time $t=2$ we have $\opt=2\eps$, forcing $\alg$ to select both edges of weight~$\eps$. At time $t=3$, the parity of the points requires $\alg$ to select at least one edge from the leftmost three points to the rightmost three points. Finally, at time $t=4$, $\alg$ can only select edges of weight $\eps^2$. Consequently, between time $t=2$ and time $t=4$, $\alg$ must delete at least three edges and insert at least four edges. Thus, the total number of modifications is at least seven, whereas a $3$-stable algorithm is permitted to make at most six changes in total. (The construction just described is essentially the same as
the one used by Bhore, Filtser, and Toth~\cite{DBLP:conf/soda/BhoreFT24} to show that an algorithm with recourse~1
cannot have bounded approximation ratio.)
In the remainder of this section we generalize this idea to obtain a counterexample for any algorithm with an $o(\log n)$ stability, leading to the following result. 
\begin{figure}
    \centering
    \includegraphics[scale=1]{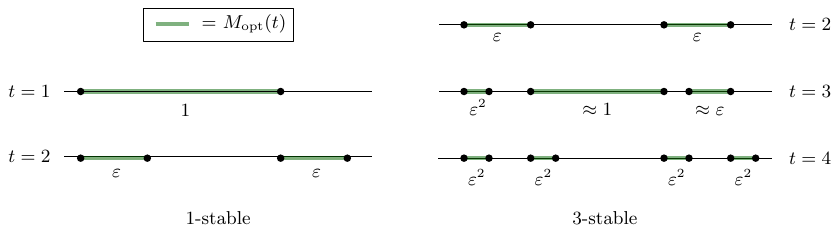}
    \caption{The counterexample for the $1$-stable algorithm is $P(1)=\{0,1\}$ with additional arrivals at $\{\eps,1+\eps\}$. For the $3$-stable algorithm, the example starts with $P(2)=\{0,\eps,1,1+\eps\}$ and adds points at $\{\eps^2,1+\eps^2\}$ at time $t=3$ and $\{\eps + \eps^2, 1 + \eps + \eps^2\}$ at time $t=4$.}
    \label{fig:3stable-counterexample}
\end{figure}

\newcommand{\depth}[1]{\ensuremath{\delta(#1)}}

\begin{theorem}
    Let \alg be an $o(\log n)$-stable algorithm 
    for the dynamic minimum-weight perfect matching problem in~$\Reals^1$. 
    Then the approximation ratio of \alg cannot be bounded as a function of~$n$. \label{thm:stability-lower-bound} 
\end{theorem}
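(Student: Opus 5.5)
The plan is to turn the $3$-stable counterexample of \autoref{fig:3stable-counterexample} into a recursive, insertion-only construction with $k$ nested scales $1 \gg \eps \gg \eps^2 \gg \cdots \gg \eps^k$. Here $\eps$ is chosen tiny in terms of $\alpha(n)$ and $n = 2^{\Theta(k)}$. The argument shows that any $\alpha(n)$-approximate algorithm must make $\Omega(k)=\Omega(\log n)$ changes on average over the update sequence. Hence some single update needs $\Omega(\log n)$ changes, which contradicts $o(\log n)$-stability.

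\textbf{Construction.} Every update inserts a mirrored pair $\{x_t, 1+x_t\}$, so the instance consists of a left cluster near $0$ and a right cluster near $1$ with identical internal structure. Following the idea of a generating function, the offset $x_t$ is read off from the binary representation of the arrival time~$t$: writing $t=\sum_i \beta_i(t) 2^i$, I set $x_t := \sum_{i} \beta_i(t)\,\eps^{\,i+1}$ (possibly with a reversed or shifted indexing, to be tuned). A cluster then has a natural hierarchical decomposition. At level $j$, two points belong to the same sub-cluster iff their offsets agree in all terms of scale larger than~$\eps^{j}$. Inserting the point of time $t$ adds one point to a whole chain of nested sub-clusters, and this flips their parities. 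Choosing the indexing so that the flipped chain has length $\Theta(k)$ for a constant fraction of the times $t$ is exactly what the binary-representation encoding should buy.

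\textbf{Step 1 (forced structure).} By \autoref{obs:optimal-matching}, $\Mopt(t)$ is determined by the parities of the sub-clusters. At each level, exactly one edge leaves every odd sub-cluster, and it has length $\Theta(\eps^{j-1})$. The total weight is dominated by the coarsest scale at which some sub-cluster is odd. Because the scales separate by factors of $\eps < 1/(\alpha(n)\cdot\mathrm{poly}(n))$, an $\alpha(n)$-approximate matching cannot afford a single edge of a coarser scale than $\Mopt$ uses, beyond $\Mopt$'s own edges. I would make this precise by a counting lemma, in the spirit of the parity/path argument of \autoref{thm:2-approx-lower-bound}. The lemma says that for every level~$j$, the number of edges of $\Malg(t)$ that cross a level-$j$ sub-cluster boundary is at most $\alpha(n)$ times the number $\Mopt$ needs, and by parity at least that number. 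In particular, whenever $\Mopt$ needs no coarse edges, $\Malg(t)$ must respect all sub-clusters.

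\textbf{Step 2 (charging changes).} For consecutive times $t, t+1$, take the symmetric difference $\Malg(t)\symdif\Malg(t+1)$. As in \autoref{thm:2-approx-lower-bound}, it contains a path between the two new points plus alternating cycles. I will show that every level whose sub-cluster parity flips forces a distinct edge of that level's scale into this symmetric difference: either an edge crossing that sub-cluster boundary is created, or one is destroyed. Summing over levels gives $|\Malg(t+1)\symdif\Malg(t)| = \Omega(\#\text{flipped levels})$. Averaging over all $t < 2^k$ then gives an $\Omega(k)$ lower bound.

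\textbf{Main obstacle.} The hard part is Step 2. A single edge of scale $\eps^{j-1}$ leaves not only its level-$j$ sub-cluster but every nested sub-cluster containing its endpoint. So one new edge could repair the parities of many levels at once, and a naive count collapses to $O(1)$ changes per update, mirroring the fact that the average number of trailing ones of $t$ is only $O(1)$. To handle this, I would first try to tune the encoding of $x_t$ so that the updated sub-clusters across levels are "misaligned". The goal is that an edge fixing the parity at level~$j$ has its endpoint at the wrong location to serve level $j'\neq j$; this would force a fresh edge per level. Failing that, I would use a potential-function argument. Let $\Phi(t)$ be the number of levels at which $\Malg(t)$ disagrees with some canonical forced structure. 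I would then show that each update either raises $\Phi$ by $\Omega(k)$ or pays $\Omega(k)$ changes, while $\Phi$ is bounded by $k$. This yields the averaged bound even when individual repairs are shared across levels.
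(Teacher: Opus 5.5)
\paragraph{Verdict: genuine gap in Step~2.}
Your construction is in fact the paper's. Since $p(2t-1)=1+p(2t-2)$ and $p(2t-2)=\sum_i \beta_i(t-1)\,\eps^{i+1}$, the paper inserts exactly your mirrored pairs, shifted by one time step. Your Step~1 also contains both halves of \autoref{lem:lb:algbound}: parity forces edges out of odd sub-clusters, and the approximation bound forbids edges coarser than the scale of $\opt(t)$.

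The gap is Step~2. You flag it but do not resolve it, and as stated it is false. Parity alone can never force more than one change per update, because matching the two new points to each other always gives a perfect matching. In this construction that one-change update is even legal (for $\alpha\geq 3$) at every transition from an even $t$ to $t+1$. The new edge has length $1$, $\alg(t)\leq \alpha^*\eps t\leq \tfrac12$, and $\opt(t+1)>\tfrac12$ by \autoref{lem:lb:edgebound}. So no per-update bound $|\Malg(t+1)\symdif\Malg(t)|=\Omega(\#\text{flipped levels})$ holds. Your fallbacks cannot fix this either: no re-encoding prevents the one-edge repair, and a potential bounded by $k$ gives no amortized $\Omega(k)$ bound unless its decreases are also paid for.

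\paragraph{The missing idea.}
Drop per-update accounting and count per depth over time windows. The key fact is that at the right times \emph{all} sub-clusters of a level are odd simultaneously, not just the chain touched by the last insertion.

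Fix $d$, and suppose $t$ is an odd multiple of $2^d$. Then each of the $2^{d+1}$ classes $\{p(k): k\equiv j \pmod{2^{d+1}}\}$ has odd size $2t/2^{d+1}$. Hence $\Malg(t)$ has at least $2^d$ edges between classes. These edges have depth $\leq d$, and depth $<d$ is excluded because $\opt(t)\leq \eps^{d}t$ (\autoref{lem:lb:partition}, \autoref{lem:lb:optbound}).

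If instead $t$ is an even multiple of $2^d$, then $\delta(t)>d$, so $\Malg(t)$ contains no edge of depth $d$ at all.

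Consequently, in each window from $j2^d$ to $(j+1)2^d$, at least $2^d$ depth-$d$ edges are inserted or deleted. That gives $n_0$ changes at depth $d$ over the $n_0$ updates.

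Every edge has a unique depth, so nothing is double counted across levels. This is exactly what defuses your ``one edge repairs all levels'' obstacle. Summing over the $\log_2 n_0$ depths gives $n_0\log_2 n_0$ changes in total, so some update needs at least $\log_2 n_0$ changes.

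The mechanism is that the number of forced edges at depth $d$ grows like $2^d$, while the frequency with which they must be built and demolished decays like $2^{-d}$. Counting flipped levels, or trailing ones of $t$, per update cannot see this trade-off.
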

Let \alg be an algorithm that is an $\alpha(n)$-approximation for some function 
$\alpha: \mathbb{N}\rightarrow \mathbb{R}$. Assume for contradiction that \alg is also 
$f(n)$-stable for some function $f(n) = o(\log n)$. We can assume without loss of generality 
that $f(n)$ is non-decreasing\footnote{If $\alg$ is $f(n)$-stable  
then it is also $g(n)$-stable for $g(n) = \max_{i \leq n} f(i)$, which is non-decreasing.} 
and there exists some constant $n_0$ that is a power of~$2$ 
and such that $f(n) < \log_2 n$ for all $n \geq n_0$. 
We present a sequence of $n_0$ events such that the total number of changes that \alg makes when processing this sequence is at least $n_0\log n_0$, which contradicts that \alg is $f(n)$-stable as $\sum_{t=1}^{n_0} f(t) < n_0 \log_2 n_0$. 

\subparagraph*{The construction.} 
For a positive integer $k$, define $[k] := \{1,\ldots,k\}$. 
Let $\alpha^* := \max_{t \in [n_0]}\alpha(t)$ and $\eps := \frac{1}{2\alpha^*n_0}$. Given $k \in \mathbb{N}_0$, we define
$k(i) \in \{0, 1\}$ to be the $i$-th least significant bit in the binary representation 
of~$k$. So $k = \sum_{i = 0}^{\msbp{k}} k(i) \cdot 2^i$, where $\msbp{k} := \floor{\log_2{k}}$
is the most significant bit position for $k>0$ and $\msbp{0} \defeq 0$. The least significant bit position $\lsbp{k}$ of $k$ 
is the smallest $i$ such that $k(i) = 1$. Moreover, let $p(k) \defeq \sum_{i = 0}^{\msbp{k}} k(i) \cdot \eps^i$. 
The problem instance consists of $n_0$ insertions of pairs of points,
where at time $t \in [n_0]$ we insert the points $p(2t-2)$ and $p(2t-1)$. 
Thus, for any $t$ it holds that $P(t) = \left\{p(k) \mid k \in \left\{0,\dots, 2t-1\right\}\right\}$. 
\begin{figure}
    \centering
    \includegraphics{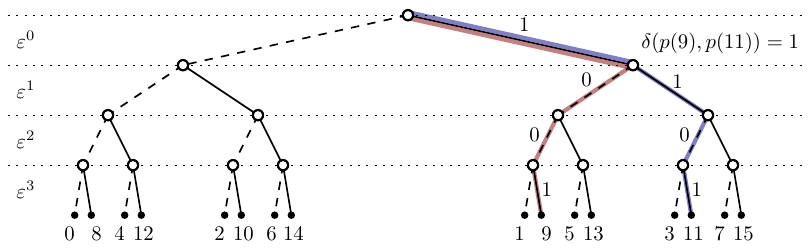}
    \caption{Trie interpretation for $P(8)$. Leaf labels correspond to times of insertion (e.g., at $t=1$ indices $\{0,1\}=\{2t-2,2t-1\}$ are inserted). Leaves map to their position on the line, e.g., the leaf $9$ is located at $p(9)=\eps^0+\eps^3$.}
    \label{fig:stability-counterexample}
\end{figure}
\subparagraph{Trie interpretation.} 
To gain intuition for our construction, it can be insightful to interpret the set $P(t)$ 
as the leaves of a binary trie, which we illustrate for $t=8$ in \autoref{fig:stability-counterexample}. 
Each level in the trie corresponds to a bit position in the binary representation of numbers in $\{0,\dots,2t-1\}$. 
Following the left edge of a node indicates the corresponding bit is~$0$, following the right indicates the bit is~$1$.
Given any $k<2t-1$, we identify the corresponding leaf by following the bits in the binary representation of $i$, 
from least to most significant. For example, $11 = (1011)_2$, so we take the path $1101$ to reach $p(11)$. 
The corresponding point on the line is located at $p(11) = \eps^0 + \eps^1 + \eps^3$. 
It is important to note that, perhaps counterintuitively, the least significant bit in the 
binary representation corresponds to the coefficient of the most significant exponent, namely $\eps^0$. 
The left-to-right ordering of the leaves in the trie corresponds to the ordering of the points in $P(t)$ on the line. 
\medskip

We define the \emph{depth} of an edge $e := p(k)p(\ell)$ 
to be $\depth{e}: = \lsbp{k \xor \ell}$, 
where $\xor$ is the bitwise XOR-operator. 
(Note that $e$ is an edge between two of the points, not an edge in the trie.)
In our trie representation, $\depth{e}$ corresponds to the level of the 
lowest common ancestor of the two leaves $p(k)$ and $p(\ell)$. 
We define $E^d$ to be the set of all edges of depth $d$. 
The following lemma relates the depth of an edge to its weight.
\begin{lemma} \label{lem:lb:edgebound}
    For any edge $e$ it holds that 
    $\weight(e) \in \left(\frac{1}{2}\eps^{\depth{e}}, \frac{3}{2}\eps^{\depth{e}}\right)$. 
\end{lemma}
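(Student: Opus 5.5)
Write $e=p(k)p(\ell)$ with $k\neq \ell$ in $\{0,\dots,2n_0-1\}$ and set $d:=\depth{e}=\lsbp{k\xor\ell}$. By definition of $p$,
\[
p(k)-p(\ell)=\sum_{i\ge 0}\bigl(k(i)-\ell(i)\bigr)\eps^i .
\]
Every coefficient lies in $\{-1,0,1\}$. Since $d$ is the least position where the bits of $k$ and $\ell$ differ, the coefficients vanish for $i<d$, and the coefficient at $i=d$ equals $\pm 1$. The plan is to factor out $\eps^d$ and bound the remaining tail.

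Up to sign we may assume the leading coefficient is $+1$. Then
\[
|p(k)-p(\ell)| = \eps^d\Bigl|1+\sum_{i>d} c_i\eps^{i-d}\Bigr|, \qquad c_i\in\{-1,0,1\}.
\]
The tail is bounded by
\[
\Bigl|\sum_{i>d}c_i\eps^{i-d}\Bigr|\le \sum_{j\ge1}\eps^j=\frac{\eps}{1-\eps}.
\]
Only finitely many $i$ occur, namely $i\le \msbp{2n_0-1}$, but the geometric bound suffices anyway.

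It remains to check that $\eps/(1-\eps)<\tfrac12$, which holds exactly when $\eps<\tfrac13$. From $\eps=\frac{1}{2\alpha^*n_0}$ with $\alpha^*\ge1$ and $n_0\ge 2$ (a power of two, and we may assume $n_0\ge2$), we get $\eps\le\tfrac14<\tfrac13$.

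Hence the bracketed factor lies strictly between $1-\tfrac12$ and $1+\tfrac12$. This gives $\weight(e)\in(\tfrac12\eps^{d},\tfrac32\eps^{d})$, and the factor is positive, so the absolute value is harmless. No step is a real obstacle. The only care points are, first, that the bound is strict, which follows from the strict inequality $\eps<1/3$, and second, that the case $d=0$ works the same way.
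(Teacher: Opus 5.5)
Your proposal is correct and follows essentially the same route as the paper: factor out $\eps^{\depth{e}}$, bound the tail of $\{-1,0,1\}$-coefficients by the geometric series $\eps/(1-\eps)<\tfrac12$, and use $\eps<\tfrac13$ for strictness. Your explicit justification $\eps\le\tfrac14$ (from $\alpha^*\ge1$ and $n_0\ge2$) fills in the detail that the paper only asserts with ``by definition''.
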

\begin{proof}
Let $S \defeq \sum_{i = 1}^{\log_2(n_0)} \eps^i$. By definition we have $\eps < \frac1{3}$, 
so by the geometric-series formula we have $S < \tfrac\eps{1 - \eps} < \tfrac1{2}$. Recall that $\depth{e}$ corresponds to the smallest 
exponent for which the two polynomials corresponding to $p(k)$ and $p(\ell)$ differ. 
Assume wlog that $k(\depth{e})=1$ and $\ell(\depth{e})=0$. Then 
\[
p(k)-p(\ell) \geq \varepsilon^{\depth{e}} - \sum_{i =\depth{e}+1}^{\log_2(n_0)} \varepsilon^i \geq \epsilon^{\depth{e}}(1 - S)
\]
and 
\[
p(k)-p(\ell) \leq \sum_{i =\depth{e}}^{\log_2(n_0)}\varepsilon^i \leq \varepsilon^{\depth{e}}(1+S).
\]
Hence,
\[
|p(k) - p(\ell)| \in \left[\eps^{\depth{e}}\left(1 - S\right), \eps^{\depth{e}}\left(1 + S\right)\right] 
                \subset \left(\tfrac1{2} \eps^{\depth{e}}, \tfrac3{2} \eps^{\depth{e}}\right),
\]
which proves the claim.
\end{proof}

Note that \autoref{lem:lb:edgebound}, together with the fact that $\eps<\tfrac{1}{3}$, implies
that any edge of depth $d$ is longer than any edge of depth~$d + 1$. Besides the depth of an edge, 
we introduce the notion of the \emph{depth at time $t$}, defined as $\depth{t} \defeq \lsbp{t}$. 
We show that there is a close link between $\delta(t)$ and the matchings $\Mopt(t)$ and $\Malg(t)$. First, we give an upper bound on $\opt(t)$.

\begin{lemma}\label{lem:lb:optbound}
    For any $t \in [n_0]$ it holds that $\opt(t) \leq \eps^{\delta(t)} t$.
\end{lemma}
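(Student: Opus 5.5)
The plan is to exhibit an explicit perfect matching on $P(t)$ whose weight is exactly $\eps^{\delta(t)}\, t$. Since $\opt(t)$ is the minimum weight over all perfect matchings, this immediately gives the bound. I would not go through \autoref{lem:lb:edgebound}: it only gives edge lengths up to a factor $\tfrac32$, which is too loose here. Instead I will use edges whose length can be computed exactly.

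Let $d := \delta(t) = \lsbp{t}$, and consider the map $\sigma(k) := k \xor 2^d$ on the index set $K := \{0,\dots,2t-1\}$, which flips bit $d$.

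\emph{Step 1: $\sigma$ maps $K$ to itself.} Since $t$ is a multiple of $2^d$, the number $2t$ is a multiple of $2^{d+1}$. Hence $K$ is a disjoint union of complete aligned blocks $\{j2^{d+1},\dots,(j+1)2^{d+1}-1\}$. Flipping bit $d$ only changes the bits below position $d+1$, so it keeps every index inside its own block.

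\emph{Step 2: $\sigma$ gives a perfect matching.} The map $\sigma$ is an involution with no fixed points. Its orbits therefore partition $K$ into $|K|/2 = t$ pairs $\{k,\sigma(k)\}$. Taking the edges $p(k)p(\sigma(k))$ yields a perfect matching $M$ on $P(t)$.

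\emph{Step 3: compute the weight of $M$.} By the definition of $p$, the representations of $p(k)$ and $p(\sigma(k))$ differ only in the coefficient of $\eps^d$: one has coefficient $1$ and the other has $0$. So $|p(k)\,p(\sigma(k))| = \eps^d$ exactly. Summing over the $t$ edges gives $\weight(M) = t\,\eps^{d}$, and hence $\opt(t) \leq \eps^{\delta(t)}\, t$.

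The only point that needs care is Step 1, namely that $K$ is closed under flipping bit $d$. This is exactly where the choice $d = \lsbp{t}$ is used: flipping any bit higher than $d$ could push an index out of $K$.
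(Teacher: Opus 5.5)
Your proposal is correct and follows essentially the same proof as the paper. The paper also uses the matching $\{p(k)\,p(k\xor 2^{\depth{t}}) \mid 0\leq k<2t\}$, argues that $\{0,\dots,2t-1\}$ is closed under flipping bit $\depth{t}$ because $2^{\depth{t}+1}$ divides $2t$, notes that this map is a fixed-point-free involution, and computes each edge length as exactly $\eps^{\depth{t}}$. Your aligned-block argument for Step~1 is just a cleaner phrasing of the paper's observation that an index below $2t$ and one at least $2t$ must differ in some bit above position $\depth{t}$.
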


\begin{proof}
    We prove the lemma by constructing a perfect matching $M$ of weight $\eps^{\delta(t)} t$. Let $m \defeq 2^{\depth{t}}$. Define $M \defeq \{p(k) p(k \xor m) \mid 0 \leq k < 2t\}$. We begin by proving that $M$ is a perfect matching of $P(t)$. By definition, we have that $2m$ divides $2t$. Since $2m = 2^{\depth{t} + 1}$, for any integers $k < 2t$ and $\ell \geq 2t$, there exists $i > \depth{t}$ such that $k(i) \neq \ell(i)$. Additionally, $m(i) = 0$ for any $i > \depth{t}$, so we can define the function $f : \{0, \dots, 2t - 1\} \rightarrow \{0, \dots, 2t - 1\}$ with $f(k) = k \xor m$. For any integers $k$ and $\ell$ with $k \neq \ell$, there exists $i$ such that $k(i) \neq \ell(i)$, but then also $f(k)(i) \neq f(\ell)(i)$ by definition of $f$, so $f$ is injective. From the property that $f(f(k)) = (k \xor m) \xor m = k$ it also follows that $f$ is surjective, hence bijective, and that $f = f^{-1}$. As such, $M$ is a perfect matching of $P(t)$.
    
    Let $p(k) p(\ell) \in M$ with $k < \ell$. Since $m = 2^{\depth{t}}$, we have that $\depth{t}$ is the only natural number for which $k(\depth{t}) \neq \ell(\depth{t})$. From this we first deduce that $\weight(p(k) p(\ell)) = \eps^{\delta(t)}$, and therefore $\weight(M) = \eps^{\delta(t)} t$.
\end{proof}

In addition, $\Malg(t)$ must contain at least $2^{\delta(t)}$ edges of depth $\depth{t}$, and no edges at a lower depth. This builds on the following crucial observation, also illustrated in \autoref{fig:lb:partition}.
\begin{lemma}\label{lem:lb:partition}
    At any time $t \in [n_0]$, the point set $P(t)$ can be partitioned into $m \defeq 2^{\depth{t}+1}$ sets $P_0,\dots,P_{m-1}$ such that: 
    \begin{enumerate}[label=(\roman*)]
        \item $|P_i|$ is odd for all $i \in \{0,\dots,m-1\}$, \label{lem:en:odd}
        \item $\depth{e} \leq \depth{t}$ for all edges $e$ with endpoints in two different sets of the partition, and \label{lem:en:depth}
        \item $\depth{e}>\depth{t}$ for all edges $e$ with endpoints in the same set of the partition.\label{lem:en:depth:two}
    \end{enumerate}
\end{lemma}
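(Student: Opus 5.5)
The natural partition groups indices by their low-order bits. Let $d := \depth{t}$ and $m := 2^{d+1}$. For each residue $r \in \{0,\dots,m-1\}$ define
\[
P_r := \{\, p(k) \mid 0 \leq k \leq 2t-1,\ k \equiv r \pmod{m} \,\}.
\]
In other words, $P_r$ contains the indices whose bits $0,\dots,d$ coincide with those of $r$. In the trie interpretation these are exactly the leaves in the subtree rooted at the node at level $d+1$ reached by following the bits of $r$. Since the $k$ range over all of $\{0,\dots,2t-1\}$, the sets $P_r$ clearly partition $P(t)$. I will then verify the three properties in turn.

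\emph{Property~(i).} Count $|P_r|$ directly. The number of $k \in \{0,\dots,2t-1\}$ with $k \equiv r \pmod{m}$ equals $2t/m$, provided $m$ divides $2t$. Indeed, $2t = 2^{d+1}\cdot t'$ where $t' = t/2^{d}$, and $t'$ is odd because $d = \lsbp{t}$. Hence $|P_r| = 2t/2^{d+1} = t'$, which is odd, and all classes are nonempty. This is the only place where the definition of $\depth{t}$ as $\lsbp{t}$ is really used, and it is the heart of the lemma.

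\emph{Properties~(ii) and~(iii).} These follow from the definition of edge depth. Take an edge $e = p(k)p(\ell)$. If $p(k)$ and $p(\ell)$ lie in the same class, then $k \equiv \ell \pmod{2^{d+1}}$. So bits $0,\dots,d$ of $k \xor \ell$ vanish, giving $\depth{e} = \lsbp{k\xor \ell} > d$. (Here $k \neq \ell$, so $k \xor \ell \neq 0$.) If the endpoints lie in different classes, then $k$ and $\ell$ differ in some bit among positions $0,\dots,d$, so $\lsbp{k\xor\ell} \leq d$.

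I do not expect any real obstacle. The one point needing care is the parity count in (i): it requires checking that $2^{d+1}$ divides $2t$ and that the quotient is odd. Everything else is bookkeeping with binary representations.
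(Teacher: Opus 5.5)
Your proposal is correct and follows the paper's proof essentially step for step. The paper uses the same residue-class partition modulo $2^{\delta(t)+1}$, the same count $|P_j| = 2t/m$ (odd because $2^{\delta(t)+1}$ is the largest power of two dividing $2t$), and the same bitwise arguments for (ii) and (iii).
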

\begin{figure}
    \centering
    \includegraphics[width=1\linewidth]{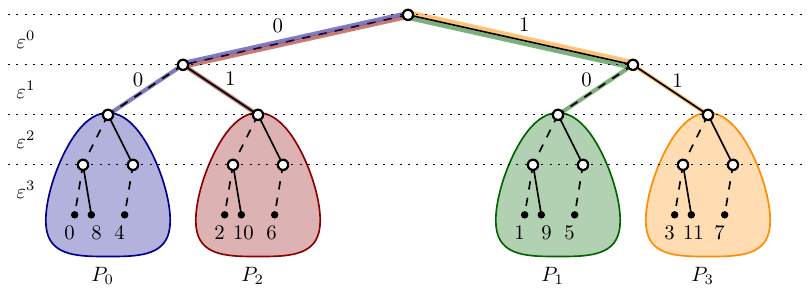}
    \caption{Illustration of \autoref{lem:lb:partition}. The figure depicts the partition of $P(6)$ into $m = 2^{\depth{6}+1} = 4$ subsets, each containing an odd number of points. Edges connecting two parts of the partition have depth at most $1$.} 
    \label{fig:lb:partition}
\end{figure}
\begin{proof}
    We partition the set $P(t)$ into subsets $P_0, \dots, P_{m - 1}$, where we define $$P_j \defeq \{(p(k) \mid k \in \{0,\dots,2t-1\} \text{ and } k \equiv j \pmod{m}\}.$$

    We start by proving statement \ref{lem:en:odd}. Note that $\frac{2t}{m}$ is an odd integer since $m = 2^{\delta(t) + 1} = 2^{\delta(2t)}$ is the largest power of two that divides $2t$. Moreover, for any $j \in \{0, \dots, m - 1\}$ it holds that $|P_j| = \left|\{qm + j \mid q \in \{0, \dots, \frac{2t}{m} - 1\}\}\right| = \frac{2t}{m}$, which is thus odd. 

    We continue with proving statement \ref{lem:en:depth}. Let $e := p(k)p(\ell)$ with $k,\ell <2t$ such that $p(k)$ and $p(\ell)$ are not included in the same set of the partition $P_0,\dots,P_{m-1}$. Thus, $k \not\equiv \ell \pmod{m}$ and by $m=2^{\depth{t}+1}$ there exists $i \leq \depth{t}$ such that $k(i) \neq \ell(i)$. Thus, $\depth{e} = \lsbp{k \xor \ell} \leq \depth{t}$.

   Finally, we prove statement \ref{lem:en:depth:two}. Let $e := p(k)p(\ell)$ with $k,\ell <2t$ such that $p(k)$ and $p(\ell)$ are included in the same set of the partition $P_0,\dots,P_{m-1}$. Then, $k \equiv \ell \pmod{m}$ and by $m = 2^{\depth{t}+1}$ it holds that $k(i)=\ell(i)$ for all $i \leq \depth{t}$ and thus $\depth{e} = \lsbp{k \xor \ell} > \depth{t}$.
\end{proof}
For any depth $d \in \{0, \dots, \log_2 n_0\}$ and $t \in \{0,\dots,n_0\}$, let $\Malg^{d}(t) \defeq \Malg(t) \cap E^d$
be the set of edges of $\Malg(t)$ at depth~$d$.
We can now provide a lower bound for $|\Malg^{\depth{t}}(t)|$.
\begin{lemma} \label{lem:lb:algbound}
At every time~$t$ we have $|\Malg^{\depth{t}}(t)| \geq 2^{\depth{t}}$ and $|\Malg^d(t)| = 0$ for all $d<\depth{t}$. 
\end{lemma}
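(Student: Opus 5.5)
We will derive both parts of \autoref{lem:lb:algbound} from the partition of \autoref{lem:lb:partition}, combined with the approximation guarantee and \autoref{lem:lb:optbound}.

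\emph{No edges of depth below $\depth{t}$.} Suppose $\Malg(t)$ contains an edge $e$ with $\depth{e}<\depth{t}$, so $\depth{e}\leq \depth{t}-1$. By \autoref{lem:lb:edgebound}, $\weight(e) > \tfrac12 \eps^{\depth{t}-1}$. By \autoref{lem:lb:optbound}, $\opt(t)\leq \eps^{\depth{t}} t \leq \eps^{\depth{t}} n_0$. Hence
\[
\alg(t) \;>\; \tfrac12 \eps^{\depth{t}-1} \;=\; \tfrac{1}{2\eps}\,\eps^{\depth{t}} \;=\; \alpha^* n_0\, \eps^{\depth{t}} \;\geq\; \alpha(t)\cdot \opt(t),
\]
which uses $\eps = 1/(2\alpha^* n_0)$ and $\alpha^*\geq\alpha(t)$. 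This contradicts the fact that \alg is an $\alpha$-approximation. (Here the approximation factor is indexed by the number of pairs, and $t\le n_0$ pairs are present at time $t$.)

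\emph{At least $2^{\depth{t}}$ edges of depth exactly $\depth{t}$.} Take the partition $P_0,\dots,P_{m-1}$ of \autoref{lem:lb:partition} with $m=2^{\depth{t}+1}$. Each $P_j$ has odd size by property~(i), and $\Malg(t)$ is a perfect matching, so for every $j$ at least one point of $P_j$ is matched to a point outside $P_j$. Each such cross edge joins two parts, so it is counted by at most two parts. This gives at least $m/2 = 2^{\depth{t}}$ cross edges. By property~(ii), every cross edge has depth at most $\depth{t}$. By the first part, no edge of $\Malg(t)$ has depth strictly less than $\depth{t}$. Hence every cross edge has depth exactly $\depth{t}$, and $|\Malg^{\depth{t}}(t)|\geq 2^{\depth{t}}$.

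The only delicate point is the arithmetic in the first part: checking that the lower bound $\tfrac12\eps^{\depth{t}-1}$ from \autoref{lem:lb:edgebound} beats $\alpha(t)\,\eps^{\depth{t}}t$. The choice of $\eps$ is made exactly for this, with the factor $\tfrac12$ absorbed by the $2$ in $\eps$'s denominator. The counting argument is routine parity.
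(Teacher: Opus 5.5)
Your proposal is correct and follows the paper's proof essentially step for step. The same contradiction via \autoref{lem:lb:edgebound}, \autoref{lem:lb:optbound} and the choice $\eps = 1/(2\alpha^* n_0)$ rules out edges of depth below $\depth{t}$, and the odd-size partition of \autoref{lem:lb:partition} then forces at least $2^{\depth{t}}$ cross edges, all of depth exactly $\depth{t}$; you just spell out the parity counting more explicitly than the paper does.
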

\begin{proof}
    By \autoref{lem:lb:partition}(i), any feasible matching~at time $t$ 
    contains at least $m/2 = 2^{\depth{t}}$ edges between different sets~$P_j$, and by
    \autoref{lem:lb:partition}(ii) these edges have depth at most $\depth{t}$.
    Now assume for contradiction that $\Malg(t)$ contains an edge $e$ with $\depth{e} < \depth{t}$. Then
    \begin{align*} 
    \alg(t) & \ \geq \ \weight(e) \\
            & \ > \ \tfrac{1}{2}\eps^{\depth{t}-1} \tag*{\llap{(by \autoref{lem:lb:edgebound})}} \\
            & \ = \ \tfrac{1}{2} \cdot (2 \alpha^{\hspace{-0.2mm}*} n_0) \cdot \eps^{\depth{t}} \tag*{\llap{(definition of $\eps$)}}  \\
            & \ \geq \ \alpha(t) \cdot \eps^{\depth{t}} t  \tag*{\llap{(since $\alpha^* = \max_{t \in [n_0]}\alpha(t)$ and $t\leq n_0$)}} \\
            & \ \geq \ \alpha(t) \cdot \opt(t), \tag*{\llap{(by \autoref{lem:lb:optbound})}}  
    \end{align*}
    which contradicts that $\Malg(t)$ is an $\alpha(t)$-approximation. Hence, 
    $|\Malg^d(t)| = 0$ for all $d<\depth{t}$ and $|\Malg^{\depth{t}}(t)| \geq 2^{\depth{t}}$. 
\end{proof}
\autoref{lem:lb:algbound} allows us to lower bound the stability of $\alg$ by 
summing over all $|\Malg^{\depth{t}}(t)|$ for all times $t\leq n_0$. We can do so because for 
any two times $t_1$ and $t_2$ with $\depth{t_1}=\depth{t_2}$ there exists a time $t_3 \in (t_1,t_2)$ 
such that $\depth{t_3}>\depth{t_1}$. Thus, at this point in time, $\alg$ must have deleted 
all edges in $\Malg^{\depth{t_1}}(t_1)$ before including the edges in $\Malg^{\depth{t_2}}(t_2)$.
The next lemma makes this precise.
\begin{lemma} \label{lem:lb:final}
    It holds that $\sum_{t = 0}^{n_0 - 1} |\Malg(t) \Delta \Malg(t+1)| \geq n_0 \log_2 n_0$.
\end{lemma}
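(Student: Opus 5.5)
We will charge changes depth by depth. Fix a depth $d\in\{0,\dots,\log_2 n_0-1\}$ and let $T_d := \{t\in[n_0] : \depth{t}=d\}$. These are the odd multiples of $2^d$ in $[n_0]$, so $|T_d| = n_0/2^{d+1}$ (using that $n_0$ is a power of two). By \autoref{lem:lb:algbound}, at each $t\in T_d$ the matching $\Malg(t)$ contains at least $2^d$ edges of depth~$d$. We will show that these edge sets must be built essentially from scratch for each $t\in T_d$. This will force at least $2^d\cdot|T_d| = n_0/2$ changes involving depth-$d$ edges. Summing over the $\log_2 n_0$ depths gives the claim.

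The key step is the following. Take consecutive elements $t_1<t_2$ of $T_d$, say $t_1 = (2j+1)2^d$ and $t_2=(2j+3)2^d$. The time $t_3 := (2j+2)2^d$ lies strictly between them and has $\depth{t_3}\ge d+1$. By \autoref{lem:lb:algbound}, $\Malg(t_3)$ contains no edge of depth $d<\depth{t_3}$. Hence every edge of $\Malg^d(t_2)$ is absent at time $t_3$ and present at time $t_2$. So somewhere in the interval $(t_3,t_2]$ there is a step $t\to t+1$ at which that edge enters the symmetric difference.

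For the first element $t_1=2^d$ of $T_d$, we use time $0$, where $\Malg(0)=\emptyset$, in place of $t_3$. This gives the same conclusion on $(0,t_1]$.

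Thus each edge of $\Malg^d(t)$, for $t\in T_d$, is charged to an insertion in the window $W_t := (t-2^d, t]$ of steps. These windows are pairwise disjoint for fixed $d$. Moreover, an insertion at step $s$ is of one specific edge, and that edge has one specific depth. Therefore, across all depths and times, each element of $\Malg(s)\Delta\Malg(s+1)$ is charged at most once: for a fixed edge depth $d$, it can only be charged to the unique window of $T_d$ containing $s$.

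Summing the charges gives
\[
\sum_{t=0}^{n_0-1}|\Malg(t)\Delta\Malg(t+1)| \;\ge\; \sum_{d=0}^{\log_2 n_0 -1} 2^d\cdot \frac{n_0}{2^{d+1}} \;=\; \frac{n_0\log_2 n_0}{2}.
\]
To reach the full $n_0\log_2 n_0$, we also count deletions. Each edge of $\Malg^d(t)$ with $t\in T_d$ must disappear before the next time of larger depth, namely $t+2^d$, which is a multiple of $2^{d+1}$, whenever $t+2^d\le n_0$. That deletion falls in the window $(t,t+2^d]$, and these windows are disjoint from one another for fixed $d$. This doubles the count, except for the last element of $T_d$, whose successor $t+2^d=n_0$ is still at most $n_0$. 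So in fact every $t\in T_d$ has its deletion window inside $[1,n_0]$.

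Deletions and insertions are distinct elements of the symmetric differences, since an edge leaving at step $s$ is not an edge entering at step $s$. We then need to check that an edge charged as a deletion at step $s$ is not also charged as a deletion for another $(d,t)$. This holds because its depth fixes $d$, and the deletion windows for fixed $d$ are disjoint. Hence the total is at least $2\cdot\frac{n_0\log_2 n_0}{2}=n_0\log_2 n_0$.

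The main obstacle is this bookkeeping: making sure the window boundaries are right (half-open intervals, with the step index $s$ meaning the transition $s\to s+1$) so that no symmetric-difference element is counted twice. A second point to verify is that $\depth{n_0}=\log_2 n_0$ exceeds every $d$ we use, so the final deletion windows are legitimate.
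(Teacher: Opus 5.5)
Your proof is correct and is essentially the paper's argument. For each depth $d$, the paper also cuts $(0,n_0]$ into the windows between consecutive multiples of $2^d$; in each window the number of depth-$d$ edges moves between $0$ and at least $2^d$, which forces $2^d$ depth-$d$ changes per window and hence $n_0$ per depth. Your split into insertion windows and deletion windows is just a more explicit charging version of the paper's triangle-inequality step, and you do not need the paper's extra depth $d=\log_2 n_0$, which it uses to reach $n_0(\log_2 n_0+1)$.
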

\begin{proof}
    We will argue that for any $d \in \{0,\dots,\log_2 n_0\}$, $\alg$ must insert or delete 
    at least $n_0$ edges of depth $d$. To this end, let $C_d \defeq \frac{n_0}{2^d}$. 
    Observe that time $t \leq n_0$ is of depth $d$ if and only if $t= j\cdot 2^d$ for 
    some odd $j \in \{0,\dots,C_d\}$. Moreover, when $j$ is even, then $\depth{t}>d$ for $t=j \cdot 2^d$. 
    Therefore, by \autoref{lem:lb:algbound} it holds that $|\Malg^d(j \cdot 2^d) \Delta \Malg^d((j+1) 2^d)| \geq 2^d$ for any $j  \in \{0,\dots,C_d\}$. We get
    \[
    \sum_{t = 0}^{n_0 - 1} |\Malg^d(t) \Delta \Malg^d(t + 1)|
        \geq \sum_{j = 0}^{C_d - 1} \ |\Malg^d(j \cdot 2^d) \Delta \Malg^d((j+1) 2^d)| \geq C_d \cdot 2^d = n_0.
    \]
    Summing over all $d \in \{0,\dots,\log_2 n_0\}$ yields that the total stability of $\alg$ from $t =0$ until $t=n_0$ is lower bounded by $n_0(\log_2 n_0 + 1)$. 
\end{proof}
\autoref{lem:lb:final} concludes our proof of \autoref{thm:stability-lower-bound} since we have shown that any algorithm that is an $\alpha(n)$-approximation for some function $\alpha$ cannot be $o(\log n)$-stable. We remark that our proof is even stronger than what we state in \autoref{thm:stability-lower-bound} since \autoref{lem:lb:final} lower bounds the \emph{amortized} stability of any approximation algorithm.

\subparagraph{\rm \emph{Remark.}}
The lower bound from \autoref{thm:stability-lower-bound} also holds in the bipartite setting. Consider the exact same construction. Let $\bcov(k)$ be the number of $i \in \Nats$ such that $k(i) = 1$, where $k \in \Nats$. Then $p(k)$ is a server if $\bcov(k)$ is even, otherwise it is a client. Note that $\bcov(k + 1) = \bcov(k) + 1$ if $k$ is even, so exactly one server and one client arrives in each event. Moreover, \autoref{lem:lb:optbound} still holds using the same matching $M$ from the proof. For any edge $p(k) p(\ell) \in M$ with $k < \ell$, we have $\bcov(\ell) = \bcov(k) + 1$, hence $p(k)$ is a server and $p(\ell)$ is a client or vice-versa, thus $M$ is a perfect bipartite matching. The rest of the proof follows identically.

\section{Concluding remarks}
We studied the stability of online algorithms for the minimum-weight perfect matching
problem in~$\Reals^1$. We presented an $O(\sqrt{n})$-stable algorithm with approximation 
ratio~2, which we showed to be optimal among any algorithm with sublinear stability.
We conjecture that $\Omega(\sqrt{n})$ is necessary to achieve approximation ratio~2. 
We also proved that any algorithm with $o(\log n)$ stability must have an unbounded
approximation ratio. It would be interesting to see what approximation ratio
can be achieved with an $f(n)$-stable algorithm for 
$f(n)$ in the range from $\Theta(\log n)$ to $\Theta(\sqrt{n})$.
We conjecture that it is possible, for any integer constant $\alpha>1$, to deterministically maintain 
a $(2\alpha-1)$-approximation with an $O(n^{1/\alpha})$-stable algorithm similar to what 
Goranci~et~al.~\cite{DBLP:conf/icml/GoranciKPSSZ25} achieve using randomization in the bipartite setting.

\bibliography{references}

\end{document}